\begin{document}
\title{pBeeGees: A Prudent Approach to Certificate-Decoupled BFT Consensus
}
%
%

\author{Kaiji Yang\inst{1}\and
Jingjing Zhang\inst{2}\and
Junyao Zheng\inst{1}\and\\
Qiwen Liu\inst{1}\and
Weigang Wu\inst{1}\and 
Jieying Zhou\inst{1}\textsuperscript{(\Letter)} 
}

\authorrunning{K. Yang et al.}
%
\institute{
School of Computer Science and Engineering, Sun Yat-sen University 
\email{\{isszjy,wuweig\}@mail.sysu.edu.cn}\\
\email{\{yangkj23,zhengjy28\}@mail2.sysu.edu.cn}
\and
Guangdong University of Foreign Studies  School of Cyber Security\\
\email{zhangjj43@gdufs.edu.cn}
}

\maketitle              
\begin{abstract}
Pipelined Byzantine Fault Tolerant (BFT) consensus is fundamental to permissioned blockchains. However, many existing protocols are limited by the requirement for view-consecutive quorum certificates (QCs). This constraint impairs performance and creates liveness vulnerabilities under adverse network conditions. Achieving ``certificate decoupling''—committing blocks without this requirement—is therefore a key research goal. While the recent BeeGees algorithm achieves this, our work reveals that it suffers from security and liveness issues. To address this problem, this paper makes two primary contributions. First, we formally define these flaws as the Invalid Block Problem and the Hollow Chain Problem. Second, we propose pBeeGees, a new algorithm that addresses these issues while preserving certificate decoupling with no additional computational overhead. To achieve this, pBeeGees integrates traceback and pre-commit validation to solve the Invalid Block Problem. Further, to mitigate the Hollow Chain Problem, we introduce a prudent validation mechanism, which prevents unverified branches from growing excessively. To summarize, pBeeGees is the first protocol to simultaneously achieve safety, liveness, and certificate decoupling in a pipelined BFT framework. Experiments confirm that our design significantly reduces block commit latency compared to classic algorithms, particularly under frequent stopping faults.

\keywords{Blockchain  \and Byzantine Fault Tolerance \and Consensus.}
\end{abstract}

\section{Introduction}
Blockchain technology has emerged as a transformative force, enabling mutually distrustful parties to compute over shared data in a decentralized manner\cite{cole2019blockchain, fernandez2019blockchain, ali2021security}. At its core, Byzantine Fault Tolerant (BFT) consensus algorithms\cite{lamport_byzantine_1982} provide the fundamental guarantees of data consistency, low commit latency, and high throughput, making them indispensable for permissioned blockchains and sharding-based public blockchain systems\cite{nakamoto2008bitcoin}. In recent years, chained BFT protocols, exemplified by HotStuff\cite{yin2019hotstuff}, Fast-HotStuff\cite{jalalzai2020fast}, and DiemBFT\cite{Team2021DiemBFTVS}, have gained significant traction. These protocols cleverly leverage pipelining and leader rotation mechanisms to achieve high throughput and responsiveness.

However, a critical challenge inherent in these modern chained BFT protocols is their stringent requirement for consecutive honest leaders to commit operations. Consequently, even simple leader failures, such as crashes, can severely weaken system liveness, leading to significantly increased commit latency or even a complete halt of block commitment. Furthermore, malicious participants can easily exploit this vulnerability by merely delaying responses.

To address this problem, BeeGees\cite{giridharan2023beegees} proposed the concept of \textbf{Certificate Decoupling}, which means block commitment can be done with non consecutive QCs. Despite its pioneering contribution, our in-depth analysis reveals inherent security and liveness flaws within the BeeGees algorithm, which could lead to degraded performance or even protocol failure. Specifically, we identify two critical issues: the        ``Invalid Block Problem'' and the  ``Hollow Chain Problem''. The ``Invalid Block Problem'' arises because BeeGees' validation process does not thoroughly track and prevent implicitly invalid blocks, allowing malicious leaders to hide explicitly invalid blocks within the chain, potentially leading to safety violations where conflicting blocks are committed. A single malicious leader can exploit this by proposing an invalid block and then halting, forcing subsequent leaders to extend a seemingly valid but implicitly corrupted chain. The ``Hollow Chain Problem'' manifests under continuous timeouts, leading to the formation of prolonged chains composed of unvalidated blocks. Such hollow chains significantly increase the computational overhead for validation, potentially causing replicas to time out again before completing validation, thereby severely impacting liveness and throughput. 

Main contributions of this work are summarized in the following: 

\begin{itemize}
    \item We formally identify and thoroughly analyze the ``Invalid Block Problem'' and the ``Hollow Chain Problem'' inherent in BeeGees.
    \item We integrate traceback validation and pre-commit validation mechanisms to robustly address the ``Invalid Block Problem,'' ensuring the validity of all ancestor blocks and preventing the propagation of invalid states. Simultaneously, we introduce a prudent validation mechanism that effectively mitigates the ``Hollow Chain Problem'' by preventing unvalidated chains from growing excessively long, thereby maintaining stable performance and liveness even under adverse conditions. 
    \item Building upon pBeeGees, we further propose pBeeGees-CB, which incorporates a novel Commit Boost mechanism. This mechanism leverages the unique properties of pBeeGees to enable blocks to be committed with just one round of voting in optimistic scenarios where all processes vote, significantly reducing block commit latency. This optimization introduces an increase in communication complexity but offers a substantial reduction in latency and can be flexibly enabled or disabled based on network load. 
    
\end{itemize}

The remainder of this paper is organized as follows. Section 2 introduces the system model and preliminaries. Section 3 provides an overview of BeeGees, while Section 4 discusses its limitations. Sections 5 and 6 present pBeeGees and pBeeGees-CB, respectively. Section 7 offers the correctness proof, and Section 8 details the evaluation. Finally, Section 9 reviews related work.

\section{System Model and Preliminaries}
\label{sec:system_model}

This section outlines the fundamental system model and defines the key terminologies used throughout this paper.

\subsection{System Model}
\label{ssec:system_model_details}
We consider a set of participants consisting of \( n \) nodes, where at most \( f \) nodes are malicious($ n = 3 f +1 )$. These malicious nodes may deviate arbitrarily from the protocol, but they have limited computational power and cannot forge digital signatures or message digests. 
We operate under a partial synchrony model\cite{dwork1988consensus}. After
GST, all transmissions arrive within a known bound $\Delta$ to their destinations.
\subsection{Preliminaries}
\label{ssec:preliminaries_details}

\subsubsection{Quorum Certificate (QC)}
A QC proves that more than $n - f$ replicas have voted for a block in a given view. It  serves as evidence that the block is certified. We use $B^{QC}$ to emphasize that block $B$ has collected QC. And we use $B_{QC}$ to denote the block proven by $B.QC$, which is $B.QC.B$.

\subsubsection{Timeout Certificate (TC)}A TC is formed by collecting at least $n - f$ timeout messages for view $v$ from different processes, proving that view $v$ has timed out and prompting others to move to view $v + 1$.

\subsubsection{First and Second Certified Blocks}
On a single chain with a block B as the leaf node, the certified block with the highest view is called the ``First Certified Block,'' while the certified block with the second-highest view is called the ``Second Certified Block.''

\subsubsection{Equivocation}In each view, a leader is expected to propose only one block. However, a malicious leader might propose multiple different blocks to different processes. This act, known as equivocation, is an attempt to compromise the system's security. 

\subsubsection{Block Structure Definition}
A block B is defined as follows:
$$\langle i, v, B_p, QC, B_{QC}, TC, \textit{tmo\_set} \rangle_{\sigma_i}$$
This structure contains all the necessary information for the block, where $i$ is the proposing process, $v$ is the view the block belongs to, $B_p$ is its parent block, $QC$ is the quorum certificate carried by the block, $B_{QC}$ is the block it certifies, $TC$ is the timeout certificate it carries, and $tmo\_set$ is defined as the $n-f$ timeout messages corresponding to the timeout certificate. $\sigma_i$ represents that the block carries the signature of process $i$.

\section{BeeGees in brief}

This section provides a concise overview of the BeeGees algorithm and highlights its core innovation—Certificate Decoupling.

\subsection{Certificate Decoupling}
\begin{figure}[htbp]
    \centering
    \includegraphics[height=0.3\textheight]{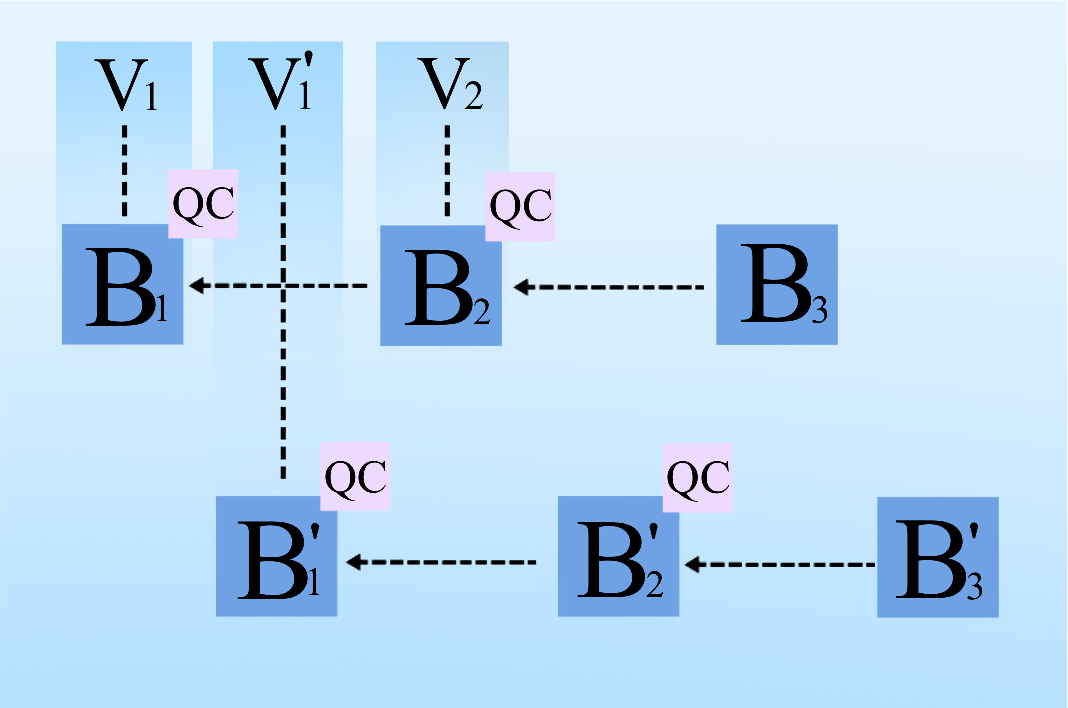}
    \caption{Fast-HotStuff requires that the views associated with certificates are consecutive.}
    \label{fig:fasthotstuff}
\end{figure}

BeeGees's primary contribution lies in achieving the \textit{certificate decoupling} property, which enhances algorithm liveness and reduces block commit latency. To better understand the motivation behind this design, let us take a look at  classic algorithm FastHotStuff. As shown in \autoref{fig:fasthotstuff}, in view $v_3$, a process that receives the proposed block $B_3$ determines whether its second certified block $B^{QC}_{1}$ can be committed. The key condition is that the views of $B^{QC}_{1}$ and the first certified block $B^{QC}_{2}$ must be consecutive, i.e., $v_2 = v_1 + 1$. If this condition is not met, even a certified block cannot be committed immediately, resulting in commit delays.

Understanding why Fast-HotStuff requires consecutive views for block commitment is essential to achieving certificate  
 decoupling.
The reason lies in the pipelined structure, which intertwines the consensus processes of different blocks, leading to the possibility of conflicting QCs. As shown in \autoref{fig:fasthotstuff}, if the view between block \( B_1 \) and \( B_2 \) is not consecutive, a block \( B_1' \) may appear in the meantime that conflicts with \( B_1 \) and also collects a QC. Such conflicting blocks with QCs can keep appearing alternately. If non-consecutive certified blocks are allowed to be committed, it may result in the commitment of conflicting blocks. 

To address the security issues caused by this commitment of non-consecutive block,  BeeGees inspects the blockchain to detect conflicting QCs before committing. If such conflicts are found, BeeGees suspends the commit; otherwise, the block is committed.

\subsection{Operation in Ideal Conditions}

Under ideal conditions, the leader’s proposed blocks consistently receive sufficient votes to become certified (i.e., obtain quorum certificates), and these certified blocks appear in consecutive views. As a result, commitments can proceed continuously.

Considering view v, the leader $l_v$ proposes a block $B_v$ and broadcasts it to all replica nodes. Upon receiving $B_v$, replicas vote for it and send their votes to the leader of the next view, $l_{v+1}$. Once $l_{v+1}$ collects at least $n-f$ votes for $B_v$, it generates a QC for $B_v$. Simultaneously, $l_{v+1}$ proposes a new block $B_{v+1}$, which references $B_v$ as its parent. If $B_{v+1}$ also obtains a QC, then $B_v$ is considered to have passed two consecutive voting phases and is thus deemed \textit{committed}.

\subsection{View Change}

Each process $p$ maintains its \texttt{high\_vote} field, which records its voting information for the block with the highest view, designated as $B_{high}$. Upon a timeout, process $p$ sends a timeout message containing \texttt{high\_vote} to the next leader $l_v$. If $l_v$ collects $n-f$ timeout messages from different processes, it completes the view change and proposes a new block $B$. The crucial aspect here is the selection of $B$'s parent block, $B_{p}$. $l_v$ selects the block $B_{high}$ corresponding to the \texttt{high\_vote} with the highest ranking among the $n-f$ timeout messages as the parent block. As illustrated in \autoref{alg:ranking}, the block with the higher view has a higher rank. If the views are the same, a tie is broken by comparing the view of the QC carried by the blocks.

\begin{algorithm}
\caption{Ranking Rule}
\label{alg:ranking}
\begin{algorithmic}[1]
    \Function{RANK}{$B1, B2$} \Comment{The Block Ranking Rule}
        \State \textbf{return} $(B_1.v > B_2.v) \lor (B_1.v = B_2.v \land B_1.QC.v > B_2.QC.v)$
    \EndFunction
\end{algorithmic}
\end{algorithm}

\subsection{Commit Rule}

Whenever a process encounters a new block $B$, it applies the commit rule to determine if any new block can be committed. Let $B_{1}^{QC}$ and $B_{2}^{QC}$ be the first and second certified blocks on the chain containing $B$, respectively. If their views are consecutive, BeeGees directly commits $B_{2}^{QC}$. If their views are not consecutive, then all blocks between $B_{2}^{QC}$(exclusive) and $B_{1}^{QC}$(inclusive) require sequential block commit validation to mitigate security risks associated with equivocation. The validation rule involves checking if, for each block $B^{*}$ that needs to be checked, its \texttt{tmo\_set} field (which contains $n-f$ timeout messages) includes a block $B^{\prime}$ that has the same view as its parent block $B_{p}^{*}$ and conflicts with $B_{2}^{QC}$. If such a block $B'$ exists, the commit validation fails, and the commit is suspended; otherwise, $B_{2}^{QC}$ can be committed.

\section{Flaws in the BeeGees Protocal}

The practice of using vote information to select a parent block can solve the previously mentioned problem of conflicting blocks, but it also introduces three new problems. One of these is equivocation\cite{jaffe2012price}, which is addressed by BeeGees through its commit rule. However, BeeGees overlooks the second and third problems, which we term the \textbf{Invalid Block Problem} and the \textbf{Hollow Chain Problem}. These two issues pose challenges to the safety and liveness of BeeGees but were not discussed in detail in its paper. The following discussion will provide a detailed explanation and analysis of them.

\subsection{The Invalid Block Problem}

We first define invalid blocks, and then introduce the potential risks associated with the issue of invalid blocks.

An \textbf{Explicitly Invalid Block} is defined as a block that does not comply with the validation rules described in the \autoref{tab:explicitly_valid_block}.

\begin{table}[htbp]
\centering

\caption{Explicitly Valid Block}
\label{tab:explicitly_valid_block}
\begin{tabularx}{\textwidth}{|p{4cm}|X|}
\hline
\textbf{Block Type} & \textbf{Validation Rules} \\
\hline
\textbf{Block generated after a timeout} & 1. The signature, $i$, and $v$ are valid and consistent. \newline 2. The $QC$ is valid and consistent with $B_{QC}$. \newline 3. $B_{QC}$ is an ancestor of $B$. \newline 4. $tmo\_set$ contains timeout messages from $n-f$ different processes and is consistent with the $TC$. \newline 5. $B_p$ is consistent with one of the blocks, $B'$, in $tmo\_set$, and for all other $B''$,  $Rank(B') \geq Rank(B'')$. \\
\hline
\textbf{Block generated after a vote} & 1. The signature, $i$, and $v$ are valid and consistent. \newline 2. The $QC$ is valid and consistent with $B_{QC}$. \newline 3. $B_{QC}$ is the parent of $B$. \\
\hline
\end{tabularx}
\end{table}

A block that passes the validation is called an \textbf{explicitly valid block}, as the validation process mainly checks the information carried by the block itself rather than its ancestors. It should be noted that, according to this definition, even if a block is explicitly valid, it does not guarantee that all its information is valid. For example, the \texttt{tmo\_{set}} of the block may contain other invalid blocks. However, although such erroneous information exists in a block, it will not affect the algorithm's safety. Adopting this definition of an explicitly invalid block helps reduce the algorithm's computational complexity.

An \textbf{Implicitly Invalid Block} is a block that is itself explicitly valid, but one of its ancestors is an explicitly invalid block.

Explicitly invalid blocks and implicitly invalid blocks will be collectively referred to as \textbf{Invalid Blocks}, which undermine algorithm safety. The safety proof of BeeGees assumes that all relevant blocks are valid and does not explore the specific block validation process \cite{giridharan2023beegees}. \autoref{fig:InvalidBlock} illustrates a scenario that compromises safety. Before view 4, the system operates normally, and both $B_3$ and $B_2$ have collected QC, so $B_2$ is committed. However, the leader of view 5 is a malicious process, and the block it proposes, $B_5$, conflicts with $B_2$. Clearly, $B_5$ is an explicitly invalid block and cannot collect a quorum certificate. But at this point, $l_5$ chooses to cease operation, triggering a timeout, and sends a timeout message to $l_6$. $l_6$ is also a malicious process, and it chooses $B_5$ as the parent of its block. Note that at this point, $B_6$ is no longer an explicitly invalid block. If $l_7$ is also a malicious leader, this scenario can continue, and the only explicitly invalid block, $B_5$, becomes hidden deep in the chain. If validation only checks the validity of the current block, then $B_6$ and subsequent blocks are all considered valid and can be committed in subsequent operations, but they all conflict with the committed block $B_2$.

\begin{figure}[htbp]
    \centering
    \includegraphics[width=0.95\textwidth]{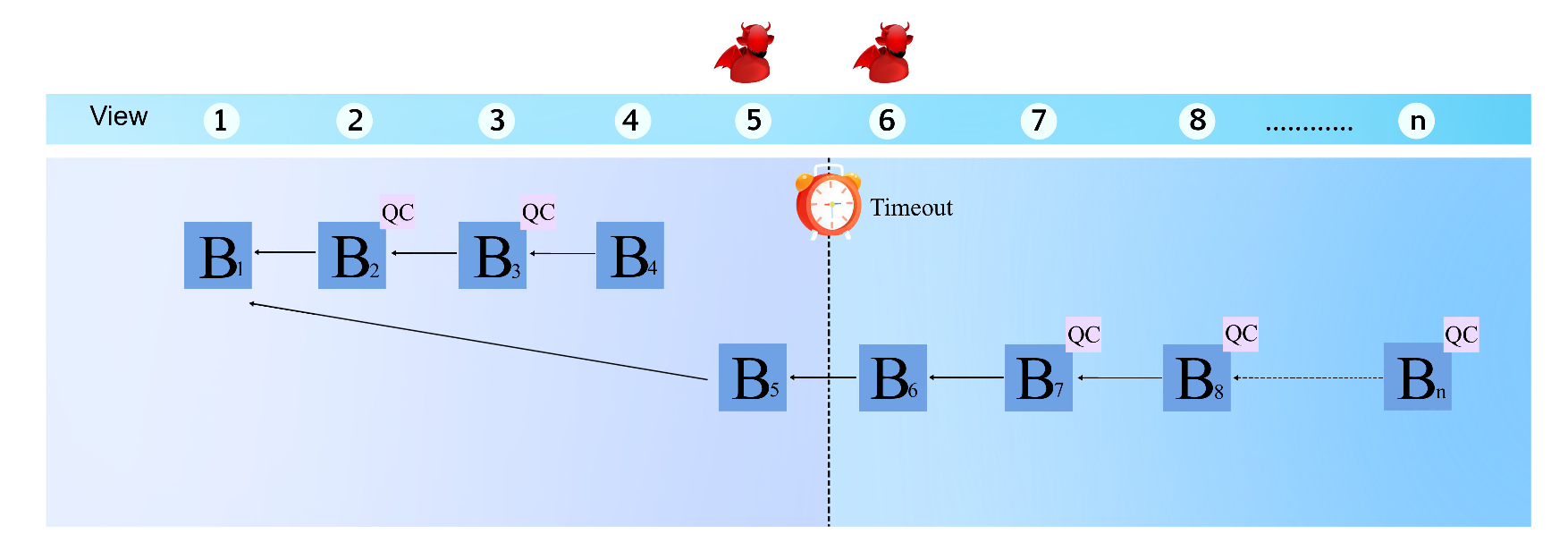}
    \caption{Invalid blocks undermine security.}
    \label{fig:InvalidBlock}
\end{figure}

\subsection{The Hollow Chain Problem}

In the case of consecutive timeouts, a long chain composed of unverified blocks can be generated. Consider the following situation: a block $B_v$ is proposed in every view, but before it can collect a QC, the system times out. $B_v$ is then included in a timeout message passed to the leader of the next view, $l_{v+1}$, and is chosen as the parent of $B_{v+1}$. At this point, neither $B_{v+1}$ nor $B_v$ has collected a vote certificate. This process can continue indefinitely, eventually forming a chain of unverified blocks on the blockchain. We call such a chain a \textbf{Hollow Chain}.

The aforementioned consecutive timeouts are not difficult to trigger. Before GST, since there is no upper bound on message transmission delay between nodes, consecutive timeouts can occur rapidly, and the rate of occurrence is proportional to the performance of the nodes. Even after GST, some view synchronizers that rely on manually setting an initial view and converging in subsequent operations can also lead to the creation of hollow chains.

A hollow chain affects both the algorithm's safety and liveness. Because processes need to perform a complete verification of the hollow chain, it impacts the algorithm's performance. In extreme cases, if the hollow chain is too long, it is even possible for a process to time out before completing the verification, leading to a vicious cycle. Furthermore, the liveness analysis of traditional consensus algorithms assumes that the block verification time can be ignored. In theory, there is no upper limit to the length of a hollow chain, so making the same assumption may no longer be appropriate. This would directly affect the applicability of these liveness theories. It might even become difficult to determine whether the algorithm can complete within a finite time, even after GST.

\section{pBeeGees Algorithm}

\subsection{Design Optimizations}
For ease of understanding, we gradually transition from the BeeGees algorithm to the pBeeGees algorithm through several optimization steps.
\subsubsection{Traceback Validation}

To address the issue in BeeGees where malicious processes can hide illegal blocks, we introduce traceback validation, which requires a process to ensure the validity of block $B$ when voting for or selecting a parent block. If $B$ and its ancestors are generated after a timeout, they must be recursively validated back to the first justified block, $B_{\text{QC}}$. The core of the validation is to ensure that for each block to be verified, $B^{*}$, its parent block, $B^{*}_{p}$, has the highest rank in the \texttt{high\_vote} carried within $B^{*}$'s \texttt{tmo\_set}, and that $B^{*}_{p}$ is itself valid. Leaders must also perform this check when selecting a parent block to prevent the protocol from stalling by building on an invalid chain.

\subsubsection{Pre-Commit Validation}

While traceback validation ensures safety, it may cause redundant re-verification. For example, assume there is a  block $B_3$, which has a first certified block $B_2$ and a second certified block $B_1$. Validating a block $B_2$ in one view and again validating $B_3$ that builds on $B_2$ causes duplication. To optimize, pBeeGees performs pre-commit validation concurrently with block validation. During validation, a flag \texttt{FLAG\_EQVC} is set if any equivocation is detected—i.e., conflicting parent choices in the timeout set. 

Depending on the outcome, replicas vote with one of two types: \texttt{vote\_normal}, which indicates no equivocation was found, and \texttt{vote\_eqvc}, which signals that equivocation was detected. The leader aggregates votes of the same type to form either $QC_{normal}$ or $QC_{eqvc}$, respectively. At this point, when a process receives block $B_3$, if it finds that the type of QC $B_3$ carries is $QC_{normal}$, it directly commits the second-certified block $B_1$. If the QC type is $QC_{eqvc}$, it only performs a validity check on $B_3$ without attempting to commit $B_1$.

\subsubsection{Prudent Validation}

In a prolonged timeout scenario, the chain may grow with unvalidated blocks, reducing liveness. To prevent these ``hollow chains,” we define a \textit{prudence degree} (PD), denoting the maximum number of consecutive timeout-generated blocks without a QC.

Each timeout-generated block increments a counter \texttt{cnt\_tmo}. If \texttt{cnt\_tmo} exceeds $PD$, the block is considered invalid. If it equals $PD$, the vote type is set to \texttt{vote\_prud}, indicating a prudent block. This mechanism forces QCs to form periodically, preventing the chain from growing indefinitely without justification.  Once enough \texttt{vote\_prud} votes are collected, a new block marked with \texttt{QC\_prud} is generated.  
 \texttt{QC\_prud} only proves block validity and cannot be used for commitment or ordering.

\subsection{Protocol Overview}

The pBeeGees protocol consists of three main components: block proposal, validation, and commitment.

\subsubsection{Block Proposal}
The block proposal algorithm is described in \autoref{alg:block_proposal}.
If the current leader process enters view $v$ via $QC_{v-1}$, this means that the leader of previous view successfully proposed block $B_{v-1}$ which later obtained sufficient votes. Specificially, the leader $l_v$ of view $v$ collects votes of the same type from $n-f$ distinct processes in view $v-1$ for block $B_{v-1}$, it generates a corresponding QC and then proposed block $B_v$, which uses $B_{v-1}$ as its parent and includes the QC in its QC field. Afterwards, $l_v$ broadcasts block $B_v$.

If the current leader process enters view $v$ via $TC_{v-1}$, meaning the previous view timed out, then the proposal process is as follows. The leader $l_v$ first collects $n-f$ valid timeout messages for view $v-1$, forms $TC_{v-1}$, and enters view $v$. From the \texttt{high\_vote} fields in the timeout messages, $l_v$ selects the valid block $B_{\text{high}}$ with the highest ranking as the parent block, generates a new block $B_v$, and includes $TC_{v-1}$ in the corresponding field. The QC field of $B_v$ uses the QC field of $B_{\text{high}}$, and the \texttt{cnt\_tmo} field is set as $B_v.\text{cnt\_tmo} = B_{high}.\text{cnt\_tmo} + 1$ to count the hollow chain length.

\begin{algorithm}[htbp]
\caption{Block Proposal}
\label{alg:block_proposal}
\begin{algorithmic}[1]
\Function{PROPOSE\_BY\_QC}{qc, cmd}
    \State curr\_view $\gets$ synchronizer.v
    \State parent $\gets$ qc.block
    \State b $\gets$ NEW\_BLOCK(parent, qc, cmd, curr\_view, proposer\_id)
    \State BROADCAST(b)
\EndFunction

\Function{PROPOSE\_BY\_TC}{tc, tmo\_set, cmd}
    \State curr\_view $\gets$ synchronizer.v
    \State parent $\gets$ GET\_HIGH\_BLOCK(tmo\_set)
    \State qc $\gets$ parent.qc
    \State cnt\_tmo $\gets$ parent.cnt\_tmo + 1
    \State b $\gets$ NEW\_BLOCK(parent, qc, cmd, curr\_view, proposer\_id, cnt\_tmo, tc, tmo\_set)
    \State BROADCAST(b)
\EndFunction
\end{algorithmic}
\end{algorithm}

\subsubsection{Block Validation}
As shown in \autoref{alg:validation}, the general process of block verification is discussed.
Specifically, if a block is produced through the normal voting process, meaning it carries a valid quorum certificate, its validation is relatively simple. It mainly involves checking the validity of the carried QC and ensuring the view number is consecutive. If validation passes, the block is considered valid, and a vote of type \texttt{vote\_normal} can be cast for it.

However, if a block is generated due to a timeout mechanism, the validation process is more complex. In this case, the algorithm must perform a recursive validation on all blocks from the current block back to the most recent block on its chain that holds any type of QC ($B_{QC}$). During this traceback validation, each block undergoes both validity verification and pre-commit validation. The core of the validity verification is to ensure that the parent block chosen by the current block is the highest-ranked among all timeout messages it collected, and that the parent block itself is valid. The algorithm also checks that the length of the ``hollow chain'' does not exceed a preset Prudence Degree threshold. 

While performing validity verification, the algorithm simultaneously executes pre-commit validation to proactively detect any potential equivocation. If a conflicting equivocation is found on the chain, the corresponding vote type is marked as \texttt{eqvc}. Furthermore, if the ``hollow chain'' length of a block reaches the prudence threshold, it is treated as a ``prudent block,'' and the vote type is marked as \texttt{prud}. The conditions for marking a vote as \texttt{vote\_eqvc} and \texttt{vote\_prud} are orthogonal. A vote can therefore be assigned a composite type if a block's validation satisfies both criteria simultaneously. Combining these conditions, pBeeGees designs a total of four vote and corresponding QC types.(See \autoref{tab1}). Finally, the replica process updates its local \texttt{high\_vote} record based on the validation result: if the current block is a prudent block, it records its parent block; otherwise, it records the current block itself. This entire validation flow, through recursive traceback and a multi-type voting mechanism, ensures security while resolving the invalid block and hollow chain problems.
\begin{table}
\centering
\caption{Concise Description of Vote Types.}\label{tab1}
\begin{tabular}{|l|l|l|}
\hline
Vote Type &  Description & Corresponding QC Type\\
\hline
$vote_{normal}$ &  Non-prudent block, validation passed  & $QC_{normal}$\\
$vote_{prud}$ &  Prudent block, validation passed & $QC_{prud}$\\
$vote_{eqvc}$ & Non-prudent block, validation failed  & $QC_{eqvc}$ \\
$vote_{prud,eqvc}$ & Prudent block, validation failed & $QC_{prud,eqvc}$\\

\hline
\end{tabular}
\end{table}

\begin{algorithm}
\caption{Block Validation Algorithm}
\label{alg:validation}

\begin{algorithmic}[1]
\Function{Valid\_Chain}{b}
  \State $b_p \gets b.p$
  \State $b_{qc} \gets b.qc.block$
  \If{b was formed by votes}
      \State \Call{Valid\_QC}{b.qc, b.v}  
      \State \Return \text{NORMAL}
  \EndIf
  \State \Call{Valid\_TC}{b.tc, b.tmo\_set, b.v} 
  \State tmo\_set\_blocks $\gets$ \Call{Get\_Tmo\_Set\_Blocks}{b.tmo\_set}
  \State \Call{Sort\_By\_Rank}{tmo\_set\_blocks}
 
  \For{each $b_i$ \textbf{in} tmo\_set\_blocks}
    \If{$b_p = b_i$}
      \State result $\gets$ \Call{Valid\_Chain}{$b_p$}
      \If{result = \text{INVALID}}
        \State \Return \text{INVALID}
      \ElsIf{result = \text{EQVC}}
        \State FLAG\_EQVC $\gets$ \textbf{true}
      \EndIf
      \State FLAG\_EXIST $\gets$ \textbf{true}
    \ElsIf{\Call{Rank}{$b_i$} > \Call{Rank}{$b_p$}}
      \State \Return \text{INVALID}
   \ElsIf{\Call{Rank}{$b_i$} = \Call{Rank}{$b_p$}}
      \If{FLAG\_EQVC = \textbf{false}}
       \State put $b_i$ into equivoc\_block\_list
      \EndIf
    \Else
      \State \textbf{break}
    \EndIf
  \EndFor
 
  \If{\textbf{not} FLAG\_EXIST}
    \State \Return \text{INVALID}
  \EndIf
 
  \If{FLAG\_EQVC = \textbf{false}}
    \State result $\gets$ \Call{Test\_Equivocation}{equivoc\_block\_list, $b_{qc}$}
  \EndIf
 \State \Return result
\EndFunction
\end{algorithmic}
\end{algorithm}

\subsubsection{Commit Rule}

Since the block has already undergone pre-commit validation during its validity verification, the block commit validation process in pBeeGees is very simple. A block certified by a non-\textit{prud} type QC is called $B_{QC_{\text{!prud}}}$ (i.e., $B_{QC_{\text{normal}}}$ or $B_{QC_{\text{eqvc}}}$). Whenever a process $p$ receives a new valid block, or when $p$, as the leader, generates a new block, $p$ performs block commit validation. Let the target block be $B
$, and let the closest and second closest blocks that obtains non-\textit{prud} type QC be $B_{QC_{\text{!prud}}1}$ and $B_{QC_{\text{!prud}}2}$. If the type tag of $B_{QC_{\text{!prud}}1}.QC$ does not contain \textit{eqvc}, it means that there is no equivocal proposal between $B_{QC_{\text{!prud}}1}$ and $B_{QC_{\text{!prud}}2}$, and $B_{QC_{\text{!prud}}2}$ can be committed; otherwise, the block is not committed. For details, refer to \autoref{alg:commit}

\begin{algorithm}
\caption{Block commit Algorithm}
\label{alg:commit}
\begin{algorithmic}[1]
\Function{COMMIT\_RULE}{b}
    \State $b_{qc_{!prud}1}, qc\_type \gets \text{GET\_NONPRUD\_QC\_BLOCK}(b)$ 
    \State $b_{qc_{!prud}2}, \_ \gets \text{GET\_NONPRUD\_QC\_BLOCK}(b_{qc_{!prud}1})$ 
    \If{$qc\_type == QC_{normal}$} 
        \State \textbf{return} $b_{qc_{!prad}2}$ 
    \EndIf
    \State \textbf{return} NULL 
\EndFunction
\\
\Function{GET\_NONPRUD\_QC\_BLOCK}{b}
    \State $qc \gets b.qc$
    \State $vote\_type \gets qc.vote\_type$
    \If{$vote\_type == vote_{prud}$} 
        \State $b_{prud} \gets b.qc.b$
        \State \textbf{return} $\text{GET\_NONPRUD\_QC\_BLOCK}(b_{prud})$ 
    \EndIf
    \State \textbf{return} $b.qc.b, vote\_type$
\EndFunction
\end{algorithmic}
\end{algorithm}

\subsection{Communication Complexity}

The pBeeGees algorithm has the same inter-process communication pattern as BeeGees, and thus it has the same communication complexity. 

In normal operation without timeouts, the block broadcasted by the leader has a size of $O(1)$ and is sent to $O(N)$ replica nodes. Meanwhile, each replica's vote message is also of size $O(1)$. Therefore, the message complexity is $O(N)$.
When a timeout occurs, the proposed block contains a \texttt{high\_vote} message of size $O(N)$ and is broadcast to all replicas, resulting in an overall message complexity of $O(N^2)$.

\section{pBeeGees-CB Algorithm}
This section presents the pBeeGees-CB algorithm. Based on the pBeeGees algorithm proposed in the previous section, pBeeGees-CB incorporates a Commit Boost mechanism to further reduce the block commit latency of the protocol. It allows the system, under the optimistic condition that ``all processes have cast their votes'' , to achieve block commitment in just a single round of voting.

\subsection{Implementation}
The design of the pBeeGees-CB algorithm requires a few modifications to the original pBeeGees algorithm.

First, when a process casts a vote, the communication method changes from a unicast to the next view's leader to a broadcast to all processes within the system.

Second, if a correct process receives votes for a block $B_v$ from all processes ($3f+1$) and all are of the \textit{vote\_normal} type, it can commit this block immediately.

Subsequently, each process needs to modify the block ranking rule. Based on the original rule, if two blocks, $B_1$ and $B_2$, are in the same view, and their corresponding proven blocks, $B_{QC}1$ and $B_{QC}2$, are also in the same view, a further comparison is made based on the number of votes for $B_1$ and $B_2$ in the \texttt{tmo\_set}. If either block has received more than $f+1$ votes, that block is given a higher rank. Otherwise, their ranks are considered equal, and either can be chosen.

\subsection{Communication Complexity}
Since pBeeGees-CB changes the one-way sending of votes to a broadcast, the communication complexity is $O(N^2)$ in the absence of timeouts. When a timeout occurs, because the size of the broadcast block is not affected by the votes, it remains $O(N^2)$. Therefore, the total communication complexity is also maintained at $O(N^2)$.

\section{Correctness Proof}
This section proves the safety and liveness properties of the pBeeGees protocol
and its boosted commit extension, pBeeGees-CB.
\subsection{Correctness of the pBeeGees Protocol}
\subsubsection{Safety Proof}

\begin{lemma} In the same view, at most one block can collect a QC (of any type).
\end{lemma}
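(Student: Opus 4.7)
My plan is the standard quorum-intersection argument adapted to the multi-type vote setting of pBeeGees.

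First I would set up a proof by contradiction: suppose two distinct blocks $B$ and $B'$ are proposed in the same view $v$ and each collects a QC of some type (the types need not match). By the QC definition in Section 2.2, each such certificate aggregates votes from at least $n-f$ distinct processes. With $n = 3f+1$, the standard counting $|\text{voters}(B) \cap \text{voters}(B')| \geq 2(n-f) - n = n - 2f = f+1$ shows that at least one correct (non-malicious) process appears in both voter sets. The goal then reduces to proving that no correct process casts votes for two different blocks within the same view.

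For that I would argue from the protocol that a correct process votes at most once per view, regardless of vote type. An honest replica only emits a vote as a result of a successful return from the \textsc{Valid\_Chain} routine in \autoref{alg:validation}, which is invoked when the replica first learns of a block for view $v$; I would invoke the view-synchronization discipline (the replica advances its local view as soon as it has cast its vote or observed a timeout, analogous to BeeGees and HotStuff) to conclude that a second block of the same view arriving afterwards is ignored. The four vote types from \autoref{tab1} are orthogonal labels attached to one vote on one block, so the existence of multiple types does not enable re-voting. Hence the $f+1$ common voters include at least one correct process that would have had to vote twice in view $v$, contradicting the single-vote-per-view rule.

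The main obstacle I expect is the second step, not the first. The quorum intersection is mechanical once $n = 3f+1$ and $n-f$ are plugged in, but carefully justifying the single-vote-per-view invariant for pBeeGees requires pointing to the exact place in the validation/voting pipeline where a correct replica commits to a single block of view $v$; since the excerpt describes \textsc{Valid\_Chain} and the vote types but does not spell out explicit ``voted'' state, I would need to reference the underlying BeeGees/HotStuff convention that a replica marks a view as voted once it emits any of $\textit{vote}_{\text{normal}}, \textit{vote}_{\text{prud}}, \textit{vote}_{\text{eqvc}}, \textit{vote}_{\text{prud,eqvc}}$. With that invariant in hand, the lemma follows immediately from quorum intersection, and the result holds uniformly across all QC types.
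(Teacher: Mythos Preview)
Your proposal is correct and matches the paper's approach exactly: the paper's proof is a single sentence stating that a correct process votes at most once per view and that two QCs would therefore require more votes than are available, which is precisely the quorum-intersection argument you spell out. Your additional discussion of where the single-vote-per-view invariant comes from is a reasonable elaboration of what the paper leaves implicit.
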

\begin{proof}This follows from the fact that a correct process votes at most once per view, and two QCs would require more votes than available in the system.
\end{proof}

\begin{lemma} If a block B passes validation, all of its ancestor blocks are also valid.
\end{lemma}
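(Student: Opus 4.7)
The plan is to argue by induction along the chain from $B$ back toward the genesis (or, equivalently, back to the first ancestor whose validity is established by some other means). Concretely, I would take as inductive hypothesis the statement ``if a block $B'$ on the chain through $B$ passes \textsc{Valid\_Chain}, then so does its parent $B'_p$,'' and then chase this up the chain. The base case is just the hypothesis of the lemma: $B$ itself passes validation.

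For the inductive step I would split on how $B'$ was produced, matching the two branches of \autoref{alg:validation}. In the vote-formed case, \textsc{Valid\_Chain} only inspects $B'.QC$, but a valid QC aggregates votes from $n-f$ distinct replicas, so at least one honest replica $h$ signed a vote for $B'_p=B'.QC.B$. Since honest replicas vote only after \textsc{Valid\_Chain} returns a non-\textsc{Invalid} verdict (using Lemma 1 to rule out multiple conflicting QCs in the same view and hence ambiguity about \emph{which} block $h$ validated), $B'_p$ passes validation. In the timeout-formed case the argument is even more direct: \textsc{Valid\_Chain} explicitly invokes itself on $b_p$, and any \textsc{Invalid} return is propagated upward, so $B'$ could not have passed unless $B'_p$ did. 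Either way, $B'_p$ is valid, and the induction continues.

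To close the induction cleanly, I would also note what happens at the ``seam'' where a timeout-formed prefix meets an ancestor with a QC: the recursion in \textsc{Valid\_Chain} descends until it reaches an ancestor with a QC (line 4--6), at which point the vote-based argument above takes over. This gives a uniform treatment across the entire ancestry.

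The main obstacle, I expect, is the vote-formed case, because there \textsc{Valid\_Chain} does \emph{not} structurally re-validate $B'_p$; instead one must appeal to the behavior of honest voters. Making this rigorous requires (i) invoking Lemma 1 so that the $n-f$ signatures in the QC unambiguously refer to a single parent block, (ii) formalizing the ``honest processes validate before voting'' invariant as a property of the protocol rather than of \textsc{Valid\_Chain} itself, and (iii) observing that \textsc{Valid\_Chain} is a deterministic predicate on the block and its referenced ancestors, so validity established at any honest replica transfers to all replicas that eventually see the same chain. Once these three points are laid down, the induction itself is routine.
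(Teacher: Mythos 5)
Your proof is correct and follows the same route as the paper, whose one-line argument simply appeals to the recursive traceback in \textsc{Valid\_Chain} back to the latest certified ancestor. You additionally spell out the step the paper leaves implicit—that validity of ancestors \emph{beyond} that certified block follows because any QC contains a vote from at least one honest replica, which must have validated that block (and, inductively, its ancestors) before voting—which is exactly what is needed to make the induction close, so your version is if anything more complete than the paper's.
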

\begin{proof}
This follows from the protocol's validation rule, which recursively traces back to the latest block on the chain that has collected a QC.
\end{proof}

\begin{lemma}
\label{lem:equivocation_proof}
If block B collects a $QC_{!prud}$, and there exists a valid block $B'$ such that $B'.v > B.v$ and $B'$ is not a descendant of B, then $B'$ or one of its ancestors must contain an Equivocation Proof.
\end{lemma}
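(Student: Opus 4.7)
The plan is to identify an ancestor of $B'$ whose timeout set simultaneously references $B$ and another distinct block proposed by the leader of view $B.v$, thereby exhibiting the required equivocation. First I would let $B^{**}$ be the ancestor of $B'$ (possibly $B'$ itself) of minimum view subject to $B^{**}.v > B.v$; by minimality the parent $B_p^{**}$ satisfies $B_p^{**}.v \leq B.v$, and because $B'$ is not a descendant of $B$, neither is $B^{**}$.

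Next I would rule out that $B^{**}$ is a normal (vote-based) proposal. Such a proposal would force $B_p^{**}.v = B^{**}.v - 1$, so combined with $B_p^{**}.v \leq B.v$ we get $B^{**}.v = B.v + 1$ and $B_p^{**}.v = B.v$; then $B^{**}$ would carry a QC for $B_p^{**}$ in view $B.v$, and by Lemma~1 (uniqueness of a QC per view) the unique QC-holder in view $B.v$ is $B$ itself, forcing $B_p^{**} = B$ and making $B^{**}$ a descendant of $B$---a contradiction. Hence $B^{**}$ was generated after a timeout and carries a \texttt{tmo\_set} of $n-f$ timeout messages for view $B^{**}.v - 1$.

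The central step is a quorum-intersection argument. The $n-f$ voters of $B$ and the $n-f$ contributors of $B^{**}$'s \texttt{tmo\_set} must overlap in at least $f+1$ processes, and for any such process $p$ I would argue that its reported \texttt{high\_vote} equals $B$. Since $p$ voted for $B$, its \texttt{high\_vote} has view at least $B.v$; if instead it referenced some block $D$ with $D.v > B.v$, then $D$ would appear in the \texttt{tmo\_set} with rank strictly higher than $B_p^{**}$ (whose view is at most $B.v$), violating the rule that the selected parent must be the highest-ranked block and thus contradicting the validity of $B^{**}$ (valid by Lemma~2 applied to $B'$). Consequently $B$ itself is referenced inside $B^{**}$'s \texttt{tmo\_set}, which in turn forces $B_p^{**}.v = B.v$.

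Finally, comparing $B_p^{**}$ with $B$: if $B_p^{**} = B$ then $B^{**}$, and hence $B'$, descends from $B$, contradicting our setup; otherwise $B$ and $B_p^{**}$ are two distinct blocks in view $B.v$, both signed by $l_{B.v}$ and both witnessed inside $B^{**}$'s \texttt{tmo\_set}---exactly an equivocation proof carried by the ancestor $B^{**}$ of $B'$. I expect the quorum-intersection step to be the main obstacle: pinning down that every intersecting $B$-voter must report \texttt{high\_vote} equal to $B$ requires simultaneously exploiting the minimality of $B^{**}$, the rank-based parent-selection rule, and the observation that any strictly later high-vote would immediately invalidate $B^{**}$ via the highest-rank requirement.
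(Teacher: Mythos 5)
Your proof follows essentially the same route as the paper's own (two-sentence) sketch: take the minimal-view ancestor of $B'$ above $B.v$ and split on whether its parent's view is strictly below or equal to $B.v$, your quorum-intersection argument being exactly what the paper means when it says the first case ``violates the protocol's timeout mechanism,'' and the second case yielding the detectable equivocation. The only nitpick is that the high-vote argument should be applied to a \emph{correct} process in the $(f+1)$-sized intersection (at least one exists), not to ``any such process,'' since a Byzantine member of the intersection can report an arbitrary high vote.
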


\begin{proof}Assume that such a $B'$ exists, and find its ancestor $B_f$ with a view greater than $B.v$. Then, by analyzing the view of $B_f$'s parent $B_p$ relative to $B$, we consider two cases: if $B_p.v < B.v$, it would violate the protocol's timeout mechanism; if $B_p.v = B.v$, it would result in a detectable equivocation. Therefore, the lemma holds.
\end{proof}

\begin{lemma} 
\label{lem:descendant_chain_1}
If a correct process commits block B (which has a $QC_{!prud}$), and $B'$ is the nearest subsequent block to collect a QC of any type, then for any block $B^*$ that collects a $QC_{!prud}$ where $B.v < B^*.v < B'.v$, $B^*$ must be a descendant of B.
\end{lemma}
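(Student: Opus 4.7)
The plan is to argue by contradiction: suppose a block $B^{*}$ carries a $QC_{!prud}$ with $B.v < B^{*}.v < B'.v$ and is not a descendant of $B$, and then derive a conflict with $B$'s commitment. My first step would be to apply Lemma~\ref{lem:equivocation_proof} to the pair $(B, B^{*})$. Since $B$ carries a $QC_{!prud}$, $B^{*}$ is valid (being certified by $QC_{!prud}$ it has passed \textsc{Valid\_Chain}), $B^{*}.v > B.v$, and $B^{*}$ is not a descendant of $B$, the lemma yields an Equivocation Proof inside $B^{*}$ or one of its ancestors; in particular, it produces a timeout-generated ancestor $B_{f}$ whose parent choice conflicts with $B$'s chain at some view in $(B.v, B'.v)$.

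Next, I would unpack what ``$B$ is committed'' means under the pBeeGees commit rule. By the rule and by \textsc{Get\_Nonprud\_QC\_Block}, there must be a block $B_{c}$ on $B$'s chain for which the traceback identifies $B$ as the second-closest non-prud ancestor and for which the QC certifying the closest non-prud ancestor, call it $B_{1}$, has type $QC_{normal}$. Because $B'$ is by assumption the nearest subsequent block to $B$ collecting any QC, $B_{1}$ cannot lie in $(B.v, B'.v)$, so $B_{1}.v \geq B'.v$ and the traceback validation that produced $B_{1}$'s $QC_{normal}$ spans the full view range $(B.v, B_{1}.v]$, which contains $B^{*}.v$. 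Since $B^{*}$ itself collected $QC_{!prud}$, at least $f+1$ correct processes voted for it and propagated the corresponding \texttt{high\_vote} into the subsequent timeout round, so a conflicting proposal must appear in the $tmo\_set$ of some block along $B$'s chain in that range. The $n-f$ correct voters forming $B_{1}$'s $QC_{normal}$ would then have flagged \texttt{FLAG\_EQVC} during pre-commit validation and voted \texttt{vote\_eqvc} rather than \texttt{vote\_normal}, making $QC_{normal}$ unattainable for $B_{1}$ and contradicting the commit of $B$.

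The main obstacle will be bridging the abstract Equivocation Proof supplied by Lemma~\ref{lem:equivocation_proof} to concrete detectability by $B_{1}$'s voters. Specifically, I need to argue that the conflicting ancestor $B_{f}$ must actually surface in the $tmo\_set$ of some block on $B$'s chain between $B$ and $B_{1}$, rather than remaining hidden inside the fork. This hinges on: (i) the intersection of quorums guaranteeing at least one correct voter in common between $B^{*}$'s certification and the timeout set of the relevant view on $B$'s chain, and (ii) the prudent-validation bound on hollow-chain length, which must be shown not to fragment the traceback into segments separated by an intervening $QC_{prud}$ that would truncate the validation before the equivocation is reached. Ruling out this fragmentation, and handling the edge case where the fork's equivocating view is already covered by a prudent QC on $B$'s chain, is where the argument will require the most care.
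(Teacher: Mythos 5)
Your overall strategy (contradiction via an equivocation that the committed chain's voters must detect, forcing an \texttt{eqvc}-typed QC that blocks the commit) matches the paper's, but you apply the key lemma in the wrong direction, and the gap you flag at the end is exactly the gap this misapplication creates. By invoking Lemma~\ref{lem:equivocation_proof} on the pair $(B, B^{*})$, you obtain an Equivocation Proof sitting in $B^{*}$ or its ancestors --- i.e.\ on the \emph{fork} --- and, as you yourself observe, there is no reason the voters certifying blocks on $B$'s chain ever traverse those blocks during their traceback. The paper instead applies Lemma~\ref{lem:equivocation_proof} with the roles swapped: $B^{*}$ is the block holding a $QC_{!prud}$, and $B'$ (a descendant of $B$, hence conflicting with $B^{*}$, with $B'.v > B^{*}.v$) is the valid non-descendant block. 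The lemma then places the Equivocation Proof in $B'$ or one of its ancestors, i.e.\ \emph{on the committed chain} between $B$ and $B'$, precisely where the replicas forming $B'$'s QC run pre-commit validation; they set \texttt{FLAG\_EQVC}, the QC for $B'$ carries the \texttt{eqvc} tag, and the commit rule then forbids committing $B$ --- contradiction. Your second paragraph (the $f+1$ correct voters of $B^{*}$ forcing $B^{*}$ to surface in the \texttt{tmo\_set} of the next timeout-generated block on $B$'s chain) is in substance a re-derivation of Lemma~\ref{lem:equivocation_proof} in this correct orientation, but you present it as an informal bridge still ``requiring the most care'' rather than recognizing that the lemma, properly instantiated, already closes it.

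Two of the specific worries you raise also dissolve once the hypothesis of the lemma is used. The fear that an intervening $QC_{prud}$ on $B$'s chain fragments the traceback is ruled out by the assumption that $B'$ is the \emph{nearest} subsequent block to collect a QC of \emph{any} type: there is no certified block strictly between $B$ and $B'$ on that chain, so the recursive validation performed by $B'$'s voters runs unbroken from $B'$ back to $B$ and covers every view in $(B.v, B'.v)$, including the view where the equivocation with $B^{*}$ surfaces. Likewise, your reconstruction of the commit rule via $B_{1}$ with $B_{1}.v \ge B'.v$ shifts the detection site to the wrong block: the detection happens at $B'$ itself, not at the first non-prud certificate above it. What your write-up is missing, concretely, is the single observation that $B^{*}$ conflicting with $B$ implies $B^{*}$ conflicts with $B'$, which is what licenses the swapped application of Lemma~\ref{lem:equivocation_proof} and makes the rest immediate.
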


\begin{proof} If $B^*$ conflicts with $B$, it also conflicts with $B'$. By \autoref{lem:equivocation_proof}, this means an equivocation proof exists between $B'$ and $B$. According to the commit rule, $B'$'s QC would be marked as \texttt{equv}, preventing $B$ from being committed, which contradicts the initial assumption.
\end{proof}

\begin{lemma} 
\label{lem:descendant_chain_2}
If a correct process commits block B (which has a $QC_{!prud}$), and $B'$ is the nearest subsequent block to collect a $QC_{!prud}$, then for any block $B^*$ that collects a QC where $B^*.v > B'.v$, $B^*$ must be a descendant of B.
\end{lemma}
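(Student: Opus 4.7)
The plan is to prove the lemma by strong induction on $B^*.v$, using \autoref{lem:equivocation_proof} together with a quorum-intersection argument on the voters of $B'$. First I would show that $B'$ itself descends from $B$: because $B$ is committed, the commit rule in \autoref{alg:commit} forces $B'.\textit{QC}$ to be of type $QC_{\text{normal}}$, so no equivocation proof surfaced during $B'$'s validation; if $B'$ did not descend from $B$, \autoref{lem:equivocation_proof} would exhibit such a proof on $B'$'s chain, which the validation would have flagged, contradicting the $QC_{\text{normal}}$ label. This anchors the induction at $B'$.

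For the inductive step I would assume every QC-collecting block with view in $(B'.v, B^*.v)$ descends from $B$, and split on how $B^*$ is produced. If $B^*$ is QC-generated, then $B^*.p$ carries a QC at view $B^*.v - 1$: either $B^*.v - 1 = B'.v$, and the one-block-per-view lemma forces $B^*.p = B'$, or $B^*.v - 1 > B'.v$, and the inductive hypothesis applies directly to $B^*.p$. If $B^*$ is TC-generated for view $B^*.v - 1$, I would intersect the $n-f$ voters of $B'$ with the $n-f$ senders whose timeout messages form $B^*.\textit{tmo\_set}$: the intersection contains at least one honest $p^*$ whose embedded \texttt{high\_vote} has rank at least that of $B'$, because honest processes only ever raise \texttt{high\_vote} to strictly higher-view blocks and $p^*$ did vote for $B'$. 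The validation rule pins the chosen parent to the maximum-rank entry of the tmo\_set, so $\text{Rank}(B^*.p) \ge \text{Rank}(B')$ and hence $B^*.p.v \ge B'.v$. When $B^*.p.v > B'.v$, I would walk $B^*.p$'s chain of timeout-generated predecessors down to its nearest QC-bearing ancestor $A$ and invoke the inductive hypothesis on $A$; when $B^*.p.v = B'.v$, a comparison of parent chains using the one-block-per-view lemma and the \texttt{equivoc\_block\_list} mechanism of \autoref{alg:validation} would tie $B^*.p$ back into $B$'s chain, possibly via a recursive application of the quorum-intersection step one view lower.

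The main obstacle I anticipate is the subcase $B^*.p.v > B'.v$ where $B^*.p$ itself does not carry a QC, since the inductive hypothesis is stated only over QC-collecting blocks. My plan is to observe that by the proposal rule $B^*.p$'s parent chain consists only of timeout-generated predecessors until it reaches its first QC-bearing ancestor $A$, that this chain has bounded length thanks to the prudence-degree cap, and that $A.v$ lies strictly between $B'.v$ and $B^*.v$, so applying the inductive hypothesis at $A$ lifts the descent through $B$ down the chain to $B^*.p$ and thence to $B^*$. A secondary subtlety, when $B'$ is itself TC-generated, is handled by one further round of the same quorum-intersection step using the voters of $B$ and the senders of $B'.\textit{tmo\_set}$.
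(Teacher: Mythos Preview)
Your base case is fine: the commit rule really does force the QC certifying $B'$ to be of type $QC_{\text{normal}}$, and together with \autoref{lem:equivocation_proof} this pins $B'$ onto $B$'s chain. The inductive step, however, has a genuine gap, and it is precisely the place where the paper takes a different route.

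In the TC-generated subcase with $B^*.p.v > B'.v$ you assert that the nearest QC-bearing ancestor $A$ of $B^*.p$ satisfies $B'.v < A.v < B^*.v$, so that the inductive hypothesis applies at $A$. But quorum intersection only lower-bounds $B^*.p.v$ by $B'.v$; it says nothing about $B^*.p.QC.v$. Along a timeout chain the $QC$ field is inherited unchanged from the parent, so $A = B^*.p.QC.B$ can sit at an arbitrarily low view, in particular below $B'.v$ or even below $B.v$. The prudence-degree cap limits how many timeout blocks separate $A$ from $B^*.p$, but it gives no relation between $A.v$ and $B'.v$. Your inductive hypothesis, which is quantified only over QC-collecting blocks above $B'.v$, therefore does not reach $A$.

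Your fallback for the equal-view subcase, ``a recursive application of the quorum-intersection step one view lower'', does not close the hole either: at view $B'.v-1$ you no longer have a non-prud-certified anchor to intersect against, so the recursion has no base. This is exactly where the paper uses the tool you omit. It argues (implicitly via the same quorum intersection) that $B^*$'s chain must contain an ancestor $B''$ with $B''.v = B'.v$, then invokes the \emph{second} clause of the ranking rule: since $B''$ tied $B'$ on view but was preferred, $B''.QC.v \ge B'.QC.v$, so the block $B_t$ certified by $B''.QC$ lands in $[B.v,\,B'.v)$. At that point \autoref{lem:descendant_chain_1} forces $B_t$, and hence $B^*$, onto $B$'s chain. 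The missing ingredient in your plan is not more induction but this ranking tiebreaker together with \autoref{lem:descendant_chain_1}.
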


\begin{proof} If $B^*$ conflicts with B, the leader of $B'$ must have equivocated. This implies an ancestor of $B^*$, say $B''$, is in the same view as $B'$. By the protocol's ranking rule, when comparing the ranks of $B'$ and $B''$, $B''$ can only have a higher rank if the block proven to by its $QC_{!prud}$, say $B_t$, have a higher rank than B, which is the block proven by $B'$'s $QC_{!prud}$. At this Point, $B.v<B_t.v<B'.v$, according to \autoref{lem:descendant_chain_1}, $B_t$ is a descendant of B. So is $B^*$.
\end{proof}

\begin{theorem}
[pBeeGees Safety Theorem]
\label{thm:pbegees_safety}
Any two committed blocks, B and $B'$, must belong to the same chain.
\end{theorem}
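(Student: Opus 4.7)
The plan is to reduce the theorem to the two descendant-chain lemmas via a short case analysis on the views of $B$ and $B'$. I would begin by assuming without loss of generality that $B.v \le B'.v$. The equal-view subcase is dispatched immediately by Lemma 1, which forces $B = B'$, so the interesting situation is $B.v < B'.v$, which I handle next.

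From the hypothesis that $B$ was committed, I would extract the witness $B_1$ produced by the commit rule: the nearest block above $B$ on the target chain that carries a $QC_{!prud}$ and whose type tag does not contain \texttt{eqvc}. By construction $B_1$ is a descendant of $B$. Both committed blocks carry $QC_{!prud}$ by the commit rule, so $B'$ in particular does.

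I would then split on $B'.v$ relative to $B_1.v$. If $B.v < B'.v < B_1.v$, Lemma \ref{lem:descendant_chain_1} directly gives that $B'$ is a descendant of $B$. If $B'.v = B_1.v$, Lemma 1 forces $B' = B_1$, which is already a descendant of $B$. If $B'.v > B_1.v$, Lemma \ref{lem:descendant_chain_2} applies to the pair $(B, B_1)$ and again yields that $B'$ is a descendant of $B$. In every case $B$ and $B'$ lie on a common chain, which is what the theorem asserts.

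The main obstacle I anticipate is not the case analysis itself but aligning the ``nearest subsequent block'' appearing in the hypotheses of Lemmas \ref{lem:descendant_chain_1} and \ref{lem:descendant_chain_2} with the specific $B_1$ produced by $B$'s commit, especially in the presence of potentially interleaved \texttt{prud}-tagged QC blocks between $B$ and $B_1$. Since prud-type QCs are explicitly excluded from the commit and ordering rules, they should not create an alternative witness or weaken the lemma hypotheses, but this bookkeeping needs to be spelled out. I would also need to note that the argument is insensitive to whether $B'$ was committed by a different process on a different target chain, since the lemmas constrain \emph{every} block with $QC_{!prud}$ above $B$, not only those on the chain $B$'s committer observed.
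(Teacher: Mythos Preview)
Your proposal is correct and follows essentially the same route as the paper: assume $B.v < B'.v$, invoke the commit witness (your $B_1$, the paper's $B_c$) carrying a $QC_{!prud}$, and then appeal to Lemmas~\ref{lem:descendant_chain_1} and~\ref{lem:descendant_chain_2} to conclude that $B'$, which itself carries a $QC_{!prud}$, must descend from $B$. Your explicit three-way split on $B'.v$ versus $B_1.v$ and your remarks about the \texttt{prud}-tag bookkeeping and cross-process applicability simply spell out details that the paper's two-sentence proof leaves implicit.
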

\begin{proof} Without loss of generality, assume $B.v < B'.v$. Let $B$ be committed because of block $B_c$, meaning both B and $B_c$ collected a $QC_{!prud}$. According to \autoref{lem:descendant_chain_1} and \autoref{lem:descendant_chain_2}, if $B'$ collects a $QC_{!prud}$, it must be a descendant of B. The theorem is thus proven.
\end{proof}

\subsubsection{Liveness Proof}
Define $F(v)$ as the waiting time for a correct process in view $v$. If the process does not enter a new view after this time, a timeout occurs. This study selects $F(v)=5\Delta$.
\begin{lemma} If a correct process p enters view v, then for every view $v' < v$, some correct process has entered $v'$.
\end{lemma}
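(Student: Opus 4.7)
The plan is to prove the lemma by strong induction on $v$, reducing to the fact that advancing one view already requires participation of at least one correct process, and then chaining this fact back to $v=0$.

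First, I would establish the base case trivially: every correct process starts at view $0$ (or view $1$, depending on convention), so the claim is vacuous when $v$ is the initial view. For the inductive step, I assume the statement holds for all views smaller than $v$ and consider how a correct process $p$ could have entered view $v$. By the protocol (see the block proposal algorithm and the view-change mechanism), the only two triggers for $p$ advancing to view $v$ are the receipt of either a valid $QC_{v-1}$ or a valid $TC_{v-1}$.

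Next, I would handle the two cases uniformly via a quorum-intersection argument. A $QC_{v-1}$ aggregates $n-f$ votes on some block of view $v-1$; a $TC_{v-1}$ aggregates $n-f$ timeout messages for view $v-1$. Because $n=3f+1$, we have $n-f = 2f+1 > f$, so in either case at least one of the contributing messages must come from a correct process $q$. Since a correct process only votes in (or times out of) the view it currently occupies, $q$ must have entered view $v-1$. Hence at least one correct process entered view $v-1$.

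Finally, I would apply the induction hypothesis to $q$ at view $v-1$: for every $v'' < v-1$, some correct process entered $v''$. Combining this with the fact that $q$ itself entered $v-1$ gives the conclusion for all $v' < v$. The only subtlety I anticipate is ensuring that the inductive hypothesis is stated against \emph{any} correct process rather than $p$ specifically, so that it can be re-applied to $q$; this is why I phrase the induction hypothesis as a statement about the set of views that have been entered by some correct process, rather than about $p$. No further obstacles arise, since the argument is purely a quorum-intersection bookkeeping step and does not depend on the invalid-block or hollow-chain mechanisms introduced earlier.
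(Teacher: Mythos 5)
Your proof is correct and follows essentially the same route as the paper's: the paper's one-line argument ("entering a new view requires a QC or TC from the previous view, ensuring sequential progress") is exactly your inductive step, with your quorum count ($n-f=2f+1>f$, so some contributor is correct and was in view $v-1$) simply making explicit what the paper leaves implicit. No gap; your version is just a more detailed write-up of the same idea.
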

\begin{proof}{This follows from the fact that entering a new view requires a QC or TC from the previous view, ensuring sequential progress through views.}
\end{proof}

\begin{lemma} Every correct process will continuously advance to higher views.
\end{lemma}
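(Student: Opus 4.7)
The plan is to argue by contradiction: suppose some correct process $p$ gets stuck at a maximum view $v$ forever. I would first invoke the previous lemma to note that every correct process has entered every view $v' \le v$, and then show that after GST the protocol machinery forces progress out of $v$ in bounded time, contradicting the assumption.

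First, I would establish a synchronization claim: after GST, once the first correct process enters view $v$, within at most $2\Delta$ every correct process has entered view $v$. This follows because the entering process either possesses a $QC_{v-1}$ or a $TC_{v-1}$, which it can forward (directly or via the synchronizer), and by partial synchrony this evidence arrives at every correct replica within $\Delta$. From that point, every correct process arms its local timer of duration $F(v)=5\Delta$. Next I would split into two cases based on the behavior in view $v$. In the optimistic case, the leader $l_v$ is correct and the proposed block $B_v$ has a sufficiently short hollow-chain prefix to be validated and voted on in time; then votes reach $l_{v+1}$ before the timers expire, a QC is formed, and all correct processes enter $v+1$ via the next proposal — contradicting that $p$ is stuck. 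In the pessimistic case, at least one correct process times out; by the argument above, all correct processes will time out within $5\Delta+2\Delta$ of the first one entering $v$, so $n-f$ timeout messages converge on $l_{v+1}$, who forms $TC_v$, broadcasts a new proposal, and again drives every correct process (including $p$) into view $v+1$.

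The main obstacle I anticipate is the interaction between liveness and the two new validation mechanisms introduced by pBeeGees. Traceback validation forces a replica to verify a potentially long chain of timeout-generated ancestors before voting, and in principle the verification time could exceed $F(v)=5\Delta$, which would break the timing argument. I would resolve this by appealing to the prudent validation mechanism: the prudence degree $PD$ caps the number of consecutive timeout blocks without a QC, so the length of any chain that must be freshly traversed during validation is bounded by $PD$, which is a protocol constant independent of the execution. Under the standard assumption that per-block validation cost is bounded, the total validation work is therefore $O(PD)$ and can be absorbed into $F(v)$ by an appropriate (but still constant) choice of $\Delta$. I would also have to verify carefully that a $QC_{prud}$ generated along the way still suffices to advance views — which it does, since any QC triggers entry into the next view — so even when the chain is artificially restricted to prudent blocks, view progress is maintained.

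Finally, I would conclude that in either case $p$ enters view $v+1$ within a bounded time after GST, contradicting the assumption that $v$ is the supremum of views $p$ reaches. Since the argument is uniform in $v$, $p$ advances through arbitrarily high views, establishing continuous progress. The bookkeeping that will need the most care is tracking when each correct process starts its timer for view $v$ (so that the $5\Delta$ window is actually aligned across replicas up to the $2\Delta$ synchronization slack) and ensuring that the composite vote types introduced in Table~\ref{tab1} do not prevent QC formation when correct replicas agree on validation outcomes.
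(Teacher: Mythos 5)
Your overall strategy (contradiction on a stuck process, synchronize entry into the view, then split into a ``correct leader proposes in time'' case and a ``timeout'' case) is a reasonable and more quantitative route than the paper's, which simply argues that the lowest-view stuck process must eventually be pulled forward once the other processes reach a view with a correct leader. However, as written your argument has a genuine gap: both of your cases route progress out of view $v$ through the \emph{next leader}. In the optimistic case you need $l_{v+1}$ to aggregate the votes into $QC_v$ and broadcast the proposal that moves everyone into $v+1$; in the pessimistic case you have the $n-f$ timeout messages ``converge on $l_{v+1}$, who forms $TC_v$, broadcasts a new proposal, and again drives every correct process into view $v+1$.'' Nothing in the lemma assumes $l_{v+1}$ is correct, and with up to $f$ Byzantine replicas there can be several consecutive faulty leaders, in which case neither a QC-carrying proposal nor a leader-formed TC ever materializes and your contradiction does not go through. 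The paper avoids this by (i) explicitly waiting until the execution reaches a view whose leader is correct, and (ii) relying on the DiemBFT-style synchronizer in which correct processes \emph{broadcast} their timeout messages and advance to view $v+1$ as soon as they themselves collect $2f+1$ of them (this is exactly how the paper's leader-entry lemma is proved), so view advancement through a faulty-leader view is leader-independent.

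The fix is small but necessary: in your pessimistic case, replace ``the timeouts converge on $l_{v+1}$, who forms $TC_v$ and proposes'' with the observation that every correct process receives the $2f+1$ broadcast timeout messages for view $v$ within $\Delta$ of the last correct process timing out, assembles $TC_v$ locally, and enters view $v+1$ regardless of whether $l_{v+1}$ is correct; iterating this shows $p$ cannot have a maximal view even across a run of faulty leaders, and the correct-leader case is then only needed later for the commit-latency theorem, not for this lemma. Your additional care about validation cost is a welcome strengthening the paper does not spell out, and your appeal to the prudence degree $PD$ to bound traceback work is consistent with the protocol; just note that a $QC_{prud}$ (indeed any QC or TC) suffices for view entry, which you already flag.
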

\begin{proof} We assume that a correct process $h$ in the lowest view never advance. Other processes would continue to advance to higher views, eventually reaching a view with a correct leader. The proposal from this leader or a timeout message from other view would eventually reach $h$, forcing it into a higher view, which is a contradiction.
\end{proof}

\begin{lemma}
\label{lem:leader_entry}
Every correct leader will eventually enter its designated view, and no later than $2\Delta$ after the first correct process arrives in that view.
\end{lemma}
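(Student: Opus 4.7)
The plan is to unpack what it means in pBeeGees for a process to enter view $v$, and then trace the supporting evidence to the designated leader $L$. First I would let $p$ be the first correct process to enter view $v$, at some time $t \ge \mathrm{GST}$. By the block-proposal and view-change rules of \autoref{alg:block_proposal}, $p$'s transition into $v$ is justified by either a quorum certificate of some type for view $v-1$, or a timeout certificate with its associated $n-f$ timeout messages for $v-1$. In either case, the justification aggregates $n-f$ signed contributions from distinct processes, of which at least $f+1$ come from correct senders because $n = 3f+1$.

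Next, I would exploit the fact that in pBeeGees votes and timeout messages are addressed to the leader of the next view. Consequently, every correct contributor whose message reached $p$ by time $t$ also transmitted the same message to $L$, with a send time no later than $t$. After GST every such in-flight message is delivered to $L$ within $\Delta$, so $L$ possesses at least $f+1$ correct votes (or timeouts) for $v-1$ by time $t+\Delta$. Combined with $L$'s own locally produced vote or timeout and whichever additional deliveries the adversary permits, $L$ can assemble a valid QC or TC for $v-1$, and thus enter view $v$, by time $t+\Delta$ in this subcase.

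Finally, I would address the corner case in which $p$'s evidence was received as a pre-assembled certificate unicast to $p$ by a Byzantine process alone, i.e.\ not reconstructible at $L$ from correct-sender messages. Here, since $p$ already holds a valid justification, the protocol's view-entry step propagates this justification (together with $p$'s own new-view message) toward $L$; after GST this propagation reaches $L$ within another $\Delta$, and $L$ enters $v$ upon processing it. Summing the at most one transmission hop needed to carry the evidence to $L$ with at most one additional $\Delta$ of slack for the triggering local step gives the claimed $2\Delta$ bound. Combined with the preceding lemma that entering view $v$ requires every $v' < v$ to have been entered by some correct process, eventual entry by $L$ is assured.

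The main obstacle is the adversarial split-delivery scenario—showing that $L$ cannot remain ignorant of the evidence while $p$ already possesses it. The resolution hinges on the $n = 3f+1$ quorum arithmetic: any set of $n-f$ contributors intersects the correct set in at least $f+1$ elements, and correct senders always transmit to the next leader as prescribed, so in the ``$p$ reconstructs the certificate from raw messages'' branch $L$ inherits those $f+1$ messages within $\Delta$, while in the ``$p$ received a pre-assembled certificate from a Byzantine relay'' branch $p$'s own forwarding upon view entry closes the gap. In both branches the worst-case delay is bounded by two $\Delta$-hops, giving the stated $2\Delta$ inequality.
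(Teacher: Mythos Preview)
Your main branch contains a real gap. You argue that by time $t+\Delta$ the leader $L$ has received at least $f+1$ contributions from correct senders, and then conclude that ``combined with $L$'s own locally produced vote or timeout and whichever additional deliveries the adversary permits, $L$ can assemble a valid QC or TC.'' But a certificate requires $n-f = 2f+1$ distinct signed messages; $f+1$ correct contributions plus $L$'s own is at most $f+2$, and the adversary is under no obligation to deliver anything further. So this step does not go through, and your $t+\Delta$ bound in that branch is unjustified.

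The paper closes this gap via a different mechanism that you do not invoke: the DiemBFT-style pacemaker it adopts has every process \emph{broadcast} its timeout message, and any correct process that sees $f+1$ timeouts for view $v-1$ immediately broadcasts its own timeout as well (Bracha amplification). The paper first observes that the first correct entrant $h$ must have entered via a TC (since votes are unicast to $l_v$, a correct $l_v$ would be the first to hold any $QC_{v-1}$). From $h$'s TC one extracts $f+1$ correct broadcasters; their timeouts reach \emph{all} correct processes by $t+\Delta$, triggering every remaining correct process to broadcast its own timeout; hence by $t+2\Delta$ every correct process, $l_v$ included, holds $2f+1$ timeouts and forms $TC_{v-1}$. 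Your corner-case branch about $p$ forwarding a pre-assembled certificate is not the mechanism the protocol actually relies on, and the extra ``$\Delta$ of slack for the triggering local step'' you add there is not grounded in any protocol action.
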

\begin{proof}
Assume the correct leader $l_v$ ultimately fails to enter view $v$. At time $t$, the first correct process $h$ enters view $v$. It must have entered this view via $TC_{v-1}$, and it has collected at least $f+1$ timeout messages from correct processes.Within time $t+\Delta$, these $f+1$ timeout messages will reach all correct processes, and these correct processes will in turn broadcast their own timeout messages. Finally, within time $t+2\Delta$, all correct processes will have collected $2f+1$ timeout messages and will enter view $v$, which contradicts the assumption.
\end{proof}

\begin{lemma}
If $F(v)=5\Delta$ and the views v and v+1 are both correct views, then all correct processes will vote for the block $B_v$ proposed in view v, and all will receive the proposal from the leader of the view v + 1, $l_{v+1}$, thereby entering view v+1.
\label{lem:vote_and_advance}
\end{lemma}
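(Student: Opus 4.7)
The plan is to chain the timing bounds supplied by the previous lemmas and verify that the $5\Delta$ budget in $F(v)$ suffices for one full propose--vote--propose cycle. Let $t$ denote the earliest time at which any correct process enters view $v$; by \autoref{lem:leader_entry}, the correct leader $l_v$ enters view $v$ no later than $t + 2\Delta$.

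First I would bound the downstream propagation. As soon as $l_v$ enters view $v$, it builds $B_v$ from its $QC_{v-1}$ or $TC_{v-1}$ and broadcasts it, so $B_v$ reaches every correct process by time $t + 3\Delta$. Because $l_v$ is correct, $B_v$ satisfies the explicit validity rules and its parent choice is the highest-ranked block in the relevant $tmo\_set$, so the traceback, pre-commit, and prudent checks executed by each correct replica all succeed. Since \textsc{Valid\_Chain} is a deterministic function of $B_v$ and its already-fixed ancestor chain, all correct processes compute the same vote type, unicast their votes to $l_{v+1}$, and those votes arrive by $t + 4\Delta$. At this point $l_{v+1}$ holds at least $2f+1$ same-type votes and, being correct, aggregates them into $QC_v$, proposes $B_{v+1}$, and broadcasts it; this reaches every correct process by time $t + 5\Delta$.

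To conclude, I would compare this arrival time against each process's individual timer. Any correct process $p$ entered view $v$ at some time $t_p \geq t$ and set its timer to fire at $t_p + 5\Delta \geq t + 5\Delta$; hence $B_{v+1}$ reaches $p$ before its view-$v$ timer expires, forcing $p$ to transition to view $v+1$ through the new proposal rather than through $TC_v$, as required by the statement.

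The main obstacle is justifying the vote-type agreement in the middle step: pBeeGees admits four distinct vote labels ($vote_{normal}$, $vote_{prud}$, $vote_{eqvc}$, $vote_{prud,eqvc}$), and $l_{v+1}$ can form a QC only from votes of a single type. The careful point to make watertight is that every correct replica runs the same deterministic validation on the same input $B_v$ with the same ancestor chain, so they all emit the same label; combined with $n-f \geq 2f+1$ correct replicas, this guarantees a matching quorum and thus the timely formation of $QC_v$ inside the $5\Delta$ window.
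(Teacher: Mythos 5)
Your proof is correct and follows essentially the same approach as the paper's: chaining the $2\Delta$ bound from \autoref{lem:leader_entry} with three further $\Delta$-hops (propose, vote, re-propose) to fit inside the $5\Delta$ timer. Your added justification that all correct replicas emit the same vote type (so $l_{v+1}$ can actually aggregate a single-type $QC_v$) is a detail the paper's proof silently assumes, and is a worthwhile strengthening rather than a departure.
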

\begin{proof}
According to \autoref{lem:leader_entry}, if the first correct process enters view $v$ at time $t$, its leader $l_v$ will make a proposal before $t + 2\Delta$. This proposal will reach all correct processes by $t + 3\Delta$, and it will receive $2f+1$ votes before $t + 4\Delta$. After this, $l_{v+1}$ generates $QC_v$ and proposes $B_{v+1}$. This proposal will arrive at all correct processes before $t + 5\Delta$, causing them to enter view $v+1$ and reset their timers.
\end{proof}

\begin{theorem}[pBeeGees Liveness Theorem] After GST, under ideal conditions, if there exist correct views v, v+1, $v'$, and $v'+1$ where $v' > v+1$, then the proposed block $B_v$ from view v will be committed by all correct processes within $\Delta$ after the first correct process enters view $v'+1$.
\end{theorem}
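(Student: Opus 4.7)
The plan is to apply \autoref{lem:vote_and_advance} twice---once to the pair $(v, v+1)$ and once to $(v', v'+1)$---and then verify that the commit rule fires on $B_{v'+1}$. The first application yields that every correct process votes for $B_v$, that $l_{v+1}$ assembles $QC_v$, and that $B_{v+1}$ carrying $QC_v$ is delivered to every correct process, which under ideal conditions also validates and votes for it, thereby setting its local \texttt{high\_vote} to a descendant of $B_v$ of view at least $v+1$. The second application gives $B_{v'}$ certified by $QC_{v'}$ of type \texttt{vote\_normal}, and $B_{v'+1}$ carrying $QC_{v'}$ delivered to every correct process.

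The key technical step is to show that $B_{v'+1}$ extends $B_v$. For this I would maintain, by induction on view from $v+1$ up to $v'$, the invariant that the \texttt{high\_vote} of every correct process is a descendant of $B_v$. The induction relies on two observations: (i) a correct process only votes for a block whose \textsc{Valid\_Chain} check succeeds, which enforces ranking and recursively validates the chain back to the last certified block; and (ii) voting for a prudent block sets \texttt{high\_vote} to its parent, which is still a descendant of $B_v$ by the induction hypothesis. When $l_{v'}$ proposes $B_{v'}$---either directly via $QC_{v'-1}$ or via $TC_{v'-1}$---its chosen parent is the highest-ranked block in the aggregated evidence, and because at least $f+1$ of the timeout messages (or the QC itself) originate from correct processes whose \texttt{high\_vote}s are descendants of $B_v$, the maximum-rank choice must also be a descendant of $B_v$.

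Given that $B_{v'+1}$ lies on $B_v$'s chain, I would then invoke \textsc{Commit\_Rule}$(B_{v'+1})$. Since $QC_{v'}$ has vote type \texttt{vote\_normal}, \textsc{Get\_NonPrud\_QC\_Block}$(B_{v'+1})$ returns $(B_{v'}, QC_{\text{normal}})$, and the rule proceeds to commit the second non-prud ancestor obtained by walking $B_{v'}.qc$ backwards through any $QC_{\text{prud}}$ links. That ancestor lies on $B_v$'s chain, so its commitment commits $B_v$ as well, by the linearization implicit in \autoref{thm:pbegees_safety}. For the timing bound, after GST the broadcast of $B_{v'+1}$ reaches every correct process within $\Delta$ of its transmission; hence if the first correct process enters view $v'+1$ at time $t^{*}$, every correct process receives $B_{v'+1}$ and executes the commit check by $t^{*} + \Delta$.

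The part I expect to be most delicate is the descendant-chain invariant across the arbitrary gap between $v+1$ and $v'$: one must carefully argue that neither malicious intermediate leaders nor prudent blocks can shift any correct \texttt{high\_vote} off $B_v$'s chain, since the \textsc{Rank} comparison and the recursive \textsc{Valid\_Chain} checks already rule out off-chain proposals being voted on by correct processes.
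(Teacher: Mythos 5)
Your proposal is correct and follows the same overall decomposition as the paper's proof: use \autoref{lem:leader_entry} and \autoref{lem:vote_and_advance} on the two pairs of consecutive correct views to obtain $QC_v$ and $QC_{v'}$, argue that $B_{v'}$ (and hence $B_{v'+1}$) lies on $B_v$'s chain, and conclude that the commit rule fires at every correct process within $\Delta$ of the broadcast of $B_{v'+1}$. The one place where you genuinely diverge is the middle step. The paper disposes of it in a single sentence by citing the safety lemmas (``all subsequent valid blocks will be descendants of $B_{v+1}$,'' i.e.\ an appeal to \autoref{lem:equivocation_proof} and \autoref{lem:descendant_chain_1}/\autoref{lem:descendant_chain_2}), whereas you reprove it from scratch by an induction on views maintaining the invariant that every correct process's \texttt{high\_vote} is a descendant of $B_v$, using the \textsc{Rank} rule and the recursive \textsc{Valid\_Chain} check to show the maximum-rank parent choice cannot leave $B_v$'s chain. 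Your route is more self-contained and makes explicit the $f+1$-correct-timeout-messages argument that the paper leaves implicit, at the cost of re-deriving machinery the safety section already provides; you also handle more carefully than the paper the fact that the block actually returned by \textsc{Commit\_Rule} may be an intermediate non-prudent certified descendant of $B_v$ rather than $B_v$ itself, with $B_v$ committed as its ancestor. Both arguments lean equally on the unstated strength of ``ideal conditions'' (no equivocation, and no prudent-type certificate at view $v'$), so I would not count that as a gap relative to the paper.
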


\begin{proof} By the logic of \autoref{lem:leader_entry} and \autoref{lem:vote_and_advance}, in views $v$, the leader $l_{v+1}$ will successfully collect votes and form $QC_v$. Therefore, it propose block $B_{v+1}(QC_v)$ (We use $B_{v+1}(QC_v)$ to denote that the block $B_{v+1}$ carries the quorum certificate $QC_v$). According to Safety Proof, All subsequent valid blocks will be descendants of $B_{v+1}$. Similarly, leader $l_{v'+1}$ will propose $B_{v'+1}(QC_{v'})$. When this proposal is delivered to all correct processes, they will observe that $QC_{v'}$ and $QC_v$ are on the same chain and will therefore commit $B_v$. The theorem is thus proven.
\end{proof}

\subsection{Correctness of the pBeeGees-CB Protocol}
The accelerated commitment mechanism does not affect the liveness of the algorithm; therefore, the following will only provide a safety proof for  pBeeGees-CB.
\subsubsection{Safety Proof}
\begin{lemma}
\label{lem:descendant}
For any valid block $B'$, if $B'.v > B_{boost}.v$, then $B'$ is a descendant of $B_{boost}$.
\end{lemma}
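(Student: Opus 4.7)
The plan is to prove the lemma by strong induction on the view number of $B'$, exploiting two facts: commit boost requires votes from all $3f+1$ processes (all of type \textit{vote\_normal}), so every correct process updates its \texttt{high\_vote} to $B_{boost}$ at that moment; and Lemma~1 still guarantees uniqueness of a QC-holding block per view. The induction hypothesis will be that every valid block $B''$ with $B_{boost}.v < B''.v < B'.v$ is a descendant of $B_{boost}$; under this hypothesis I split on how $B'$ was formed.

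In the first case, $B'$ is formed by votes and carries a QC for a parent in view $B'.v-1$. If $B'.v-1 > B_{boost}.v$, the induction hypothesis immediately gives that $B'.p$ (which must be a valid block since it carries a QC) is a descendant of $B_{boost}$, and so is $B'$. If $B'.v-1 = B_{boost}.v$, then $B'.p$ holds a QC in view $B_{boost}.v$, so by Lemma~1 $B'.p = B_{boost}$, and we are done.

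In the second case, $B'$ is formed by timeout, so $B'.\textit{tmo\_set}$ contains $2f+1$ timeout messages, of which at least $f+1$ come from correct processes. Every such correct process has already voted \textit{vote\_normal} for $B_{boost}$, and since \texttt{high\_vote} only moves to strictly higher views, each of these $f+1$ \texttt{high\_vote} entries carries a block of view $\geq B_{boost}.v$; by the induction hypothesis any of them with view $> B_{boost}.v$ is already a descendant of $B_{boost}$, while those with view exactly $B_{boost}.v$ must be $B_{boost}$ itself (a correct process votes at most once per view). The validation rule forces $B'.p$ to be the highest-ranked block in $B'.\textit{tmo\_set}$, so $B'.p.v \geq B_{boost}.v$. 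If $B'.p.v > B_{boost}.v$, the induction hypothesis again applies and $B'$ is a descendant of $B_{boost}$.

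The remaining and genuinely hard subcase is $B'.p.v = B_{boost}.v$ with $B'.p \neq B_{boost}$, and this is where I must lean on the pBeeGees-CB refinement of the ranking rule. Because every correct process's view-$B_{boost}.v$ vote went to $B_{boost}$, any conflicting block $B_x$ in that view can collect at most $f$ high\_vote entries in $B'.\textit{tmo\_set}$, whereas $B_{boost}$ appears at least $f+1$ times; the pBeeGees-CB tiebreaker on vote count in \texttt{tmo\_set} then strictly out-ranks $B_x$ (assuming the intended reading of the ``more than $f+1$'' threshold), forcing $B'.p = B_{boost}$ and contradicting $B'.p \neq B_{boost}$. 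The main obstacle, and the point I will be most careful about, is making this tiebreaker argument airtight: I need to check that the modified rank rule indeed excludes ties in exactly this situation, using that the $2f+1$ correct processes collectively block any rival from gathering enough high\_vote weight, so that the boosted commit's ``all $3f+1$ voted'' invariant has no loopholes left after timeouts.
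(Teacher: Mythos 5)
Your proof is correct and rests on the same two pillars as the paper's: (i) because a boost commit requires \textit{vote\_normal} votes from all $3f+1$ processes, every correct process's \texttt{high\_vote} is at least $B_{boost}$ afterwards, so any timeout-formed block above view $B_{boost}.v$ cannot select a parent below that view; and (ii) the pBeeGees-CB vote-count tiebreaker resolves the remaining tie at view $B_{boost}.v$ in favour of $B_{boost}$. The packaging differs: the paper argues by contradiction via the nearest ancestor $B_f$ of $B'$ whose parent drops to view $\le B_{boost}.v$, whereas you run a strong induction on $B'.v$ and split on whether $B'$ was formed by votes or by timeout. Your decomposition is somewhat more complete --- you explicitly dispatch the vote-formed case via the one-QC-per-view lemma, a step the paper compresses into ``$B_p$ and $B_{boost}$ must have competed'' --- and you correctly observe that in the tie subcase the $f{+}1$ correct \texttt{high\_vote} entries must all equal $B_{boost}$ exactly (not merely have view $\ge B_{boost}.v$), since otherwise the parent's view would exceed $B_{boost}.v$ and the inductive case would apply. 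Your flagged concern about whether ``more than $f+1$ votes'' means $\ge f{+}1$ or $\ge f{+}2$ is legitimate, but the paper's own proof asserts the tiebreak with the same $f{+}1$ count, so this is an imprecision in the protocol description rather than a gap in your argument; likewise, both your proof and the paper's gloss over the precondition that the vote-count tiebreak only fires when the competing blocks' QC views also coincide.
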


\begin{proof}
Assume for contradiction that $B'$ conflicts with $B_{boost}$ for some $B'.v > B_{boost}.v$. Let $B_f$ be the nearest ancestor of $B'$ with a view greater than $B_{boost}.v$, and let $B_p$ be its parent. By construction, $B_p.v \le B_{boost}.v$. For $B_f$ to be valid, it implies $B_p$ and $B_{boost}$ must have competed, forcing $B_p.v = B_{boost}.v$. When comparing candidates, the ranking rule must select a parent for $B_f$. Since $B_{boost}$ has votes from all correct processes, it outranks $B_p$ by vote count. Thus, any honest node would have chosen $B_{boost}$ as the parent, which contradicts that $B_p$ is the parent.
\end{proof}

\begin{lemma}
\label{lem:no_conflict}
Any two boost-commit blocks, $B_{boost}$ and $B'_{boost}$, do not conflict.
\end{lemma}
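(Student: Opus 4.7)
The plan is to reduce the statement to a short case analysis on the relative views of the two boost-commit blocks, and then invoke the results already established above. Without loss of generality, assume $B_{boost}.v \le B'_{boost}.v$. I will treat the equal-view case and the strictly-increasing-view case separately.

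For the equal-view case, $B_{boost}.v = B'_{boost}.v$, both blocks must have collected QCs of some type in the same view (indeed, a boost-commit requires $3f+1$ votes, which trivially forms a QC). By Lemma~1, at most one block per view can collect a QC of any type, so $B_{boost} = B'_{boost}$ and there is nothing to prove. For the strictly-increasing case, $B_{boost}.v < B'_{boost}.v$, the block $B'_{boost}$ must be valid, since a correct process only boost-commits a block after observing $3f+1$ valid \texttt{vote\_normal} messages for it, which in particular requires $B'_{boost}$ to have passed the validation rules. Applying Lemma~\ref{lem:descendant} to $B_{boost}$ and $B'_{boost}$ then yields directly that $B'_{boost}$ is a descendant of $B_{boost}$, so the two blocks lie on a common chain and do not conflict.

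The only subtle point I anticipate, and therefore the main obstacle, is confirming that the hypothesis of Lemma~\ref{lem:descendant} genuinely applies here: that lemma is phrased in terms of a single boost-commit block $B_{boost}$ that has, by definition, received votes from every correct process and therefore dominates any competing proposal under the modified ranking rule. Since by definition any boost-commit block collected $3f+1$ votes, the $\ge 2f+1$ correct processes all voted for it, which is exactly the condition that Lemma~\ref{lem:descendant}'s proof relies on through the refined tie-breaking step of the ranking rule. This observation connects the case $B_{boost}.v < B'_{boost}.v$ to Lemma~\ref{lem:descendant} cleanly, and the theorem follows.
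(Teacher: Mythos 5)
Your proof is correct and takes essentially the same route as the paper, which simply invokes Lemma~\ref{lem:descendant}; your additional handling of the equal-view case via Lemma~1 and your check that a boost-commit block satisfies the hypothesis of Lemma~\ref{lem:descendant} are sound elaborations of that same argument.
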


\begin{proof}
This follows directly from Lemma~\ref{lem:descendant}.
\end{proof}

\begin{lemma}
\label{lem:normal_boost_descendant}
If a block $B$ is committed normally by some process, and a block $B_{boost}$ is committed via boosting by some process, and if $B_{boost}.v > B.v$, then $B_{boost}$ is a descendant of $B$.
\end{lemma}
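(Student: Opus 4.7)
The plan is to argue that $B_{boost}$ implicitly possesses a $QC_{normal}$, so that the standard pBeeGees safety machinery applied to $B$ forces $B_{boost}$ to be one of its descendants. Concretely, I would first observe that being boost-committed means some correct process has received $vote_{normal}$ messages for $B_{boost}$ from all $3f+1$ processes. Any $n-f = 2f+1$-subset of these signed votes constitutes a valid $QC_{normal}$, and hence a $QC_{!prud}$; whether this QC has yet been packaged into a subsequent block header is immaterial to safety, since the signed vote set already exists and could be assembled by any correct process.

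Second, I would invoke \autoref{lem:descendant_chain_1} and \autoref{lem:descendant_chain_2}. Since $B$ is committed normally, the nearest subsequent block $B'$ with a $QC_{!prud}$ carries no $eqvc$ flag, so these two lemmas jointly guarantee that every block $B^{*}$ which collects a $QC_{!prud}$ with $B^{*}.v > B.v$ must descend from $B$: the first lemma handles $B.v < B^{*}.v < B'.v$, the second handles $B^{*}.v > B'.v$, and the boundary $B^{*}.v = B'.v$ collapses via Lemma~1, which forces $B^{*}=B'$ as at most one block per view can collect a QC. Setting $B^{*} = B_{boost}$ and using $B_{boost}.v > B.v$ together with the implicit $QC_{normal}$ established above closes the argument.

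The main obstacle will be justifying that \autoref{lem:descendant_chain_1} and \autoref{lem:descendant_chain_2} survive under the refined ranking rule of pBeeGees-CB, which adds the $tmo\_set$ vote-count tiebreaker. I would argue that this refinement is strictly a tiebreaker: it only activates when both the block view and the QC view coincide, which are precisely the cases the original pBeeGees ranking left indifferent. The reasoning in those lemmas — that an equivocating sibling can only be chosen as parent by beating $B'$ in rank — therefore still goes through, because the new criterion can at most further discriminate in favour of the block backed by more votes, never invert the strict order used by the original proof. With the underlying lemmas intact, the conclusion $B_{boost}$ descends from $B$ follows directly.
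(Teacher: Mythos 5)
Your proof is correct in substance but takes a somewhat different route from the paper's. The paper splits on the view of $B_c$ (the block whose arrival triggered the normal commit of $B$) relative to $B_{boost}.v$: when $B_c.v \ge B_{boost}.v$ it invokes Lemma~\ref{lem:descendant} to place $B_c$, $B_{boost}$, and $B$ on one chain, and only in the remaining case does it fall back on Lemmas~\ref{lem:descendant_chain_1} and~\ref{lem:descendant_chain_2} together with the observation that $B_{boost}$ is valid and has a majority of votes. You instead handle all cases uniformly by first promoting $B_{boost}$'s $3f+1$ \texttt{vote\_normal} votes to an implicit $QC_{normal}$, then covering the full range of $B_{boost}.v$ with Lemma~\ref{lem:descendant_chain_1}, Lemma~\ref{lem:descendant_chain_2}, and Lemma~1 for the boundary view; this avoids Lemma~\ref{lem:descendant} entirely and is arguably cleaner, since the paper's own second case already relies on exactly the same ``votes suffice even if no leader assembled the QC'' reading of those lemmas. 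Your closing discussion of why the refined vote-count tiebreaker in pBeeGees-CB's ranking rule cannot invert the strict orderings used in the proofs of Lemmas~\ref{lem:descendant_chain_1} and~\ref{lem:descendant_chain_2} addresses a genuine gap that the paper leaves entirely implicit, and is a worthwhile addition. One residual caveat, inherited from the paper rather than introduced by you: Lemma~\ref{lem:descendant_chain_1} quantifies over the nearest subsequent block with a QC of \emph{any} type while Lemma~\ref{lem:descendant_chain_2} uses the nearest with a $QC_{!prud}$, so if a $QC_{prud}$ block sits between $B$ and $B_c$ the two ranges do not quite tile; your argument glosses over this exactly as the paper's own safety theorem does.
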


\begin{proof}
Let $B_c$ be the block that caused $B$ to be committed (i.e., the process committed $B$ upon receiving the proposal for $B_c$). If $B_c.v \ge B_{boost}.v$, by Lemma~\ref{lem:descendant}, it is clear that $B_c$, $B_{boost}$, and $B$ are on the same chain. If $B_c.v < B_{boost}.v$, then based on Lemmas~\ref{lem:descendant_chain_1} and ~\ref{lem:descendant_chain_2}, since $B_{boost}$ is a valid block and has received a majority of votes, it is clear that the lemma holds.
\end{proof}

\begin{theorem}[Boost Commit Safety Theorem]
\label{thm:pbegeesCB_safety}
The boost commit mechanism preserves the safety of pBeeGees.
\end{theorem}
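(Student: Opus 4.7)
The plan is to prove Theorem~\ref{thm:pbegeesCB_safety} by reducing it to a case analysis on which commit mechanism was used for each of the two committed blocks. Given any two committed blocks $B$ and $B'$ with $B.v < B'.v$ (WLOG), each one was either committed via the ordinary pBeeGees commit rule (Algorithm~\ref{alg:commit}) or via the new boost-commit shortcut. This yields four combinations, and my goal is to show in each one that $B$ and $B'$ lie on the same chain, which is exactly the statement inherited from Theorem~\ref{thm:pbegees_safety}.

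First I would dispose of the two ``pure'' cases. If both blocks are committed normally, the boost mechanism is irrelevant to the argument and the claim follows immediately from Theorem~\ref{thm:pbegees_safety}, because boosting only adds an additional commit path without altering the normal commit rule or the validation algorithm. If both are boost-committed, Lemma~\ref{lem:no_conflict} directly gives that they do not conflict, and since incomparable blocks on the block tree necessarily conflict, they must lie on the same chain.

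Next I would handle the two mixed cases using the supporting lemmas already established. If $B$ is normal-committed and $B'$ is boost-committed (so $B_{boost} = B'$ has the higher view), Lemma~\ref{lem:normal_boost_descendant} immediately yields that $B'$ is a descendant of $B$. For the symmetric situation where $B$ is boost-committed and $B'$ is normal-committed at a higher view, I would apply Lemma~\ref{lem:descendant}: because $B'$ was normally committed it carries a $QC_{!prud}$ and in particular is a valid block, and $B'.v > B.v = B_{boost}.v$ then forces $B'$ to be a descendant of $B$. Combining all four cases establishes the theorem.

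The main obstacle I anticipate is not any single case in isolation but a bookkeeping point at the seam between the two protocols: Lemma~\ref{lem:descendant} is stated for an arbitrary \emph{valid} block $B'$, and I must verify that a normally committed block is indeed valid in the precise sense used by that lemma. This is essentially immediate -- the commit rule fires only for a block whose two nearest non-prudent certified ancestors passed \textsc{Valid\_Chain} -- but it deserves an explicit sentence in the proof so that the appeal to Lemma~\ref{lem:descendant} in the final mixed case is watertight. Once this observation is in place, the remainder of the argument is a straightforward gluing of the previously proven lemmas.
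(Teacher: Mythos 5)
Your proof is correct and takes essentially the same approach as the paper: the paper's own proof simply cites Theorem~\ref{thm:pbegees_safety}, Lemma~\ref{lem:no_conflict}, Lemma~\ref{lem:normal_boost_descendant}, and Lemma~\ref{lem:descendant} for the same implicit four-way case split that you spell out explicitly. Your added remark about verifying that a normally committed block is ``valid'' in the sense required by Lemma~\ref{lem:descendant} is a reasonable bookkeeping point that the paper glosses over, but it does not change the structure of the argument.
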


\begin{proof}
From Lemma~\ref{lem:descendant}, Lemma~\ref{lem:no_conflict}, Lemma~\ref{lem:normal_boost_descendant}, and  Theorem~\ref{thm:pbegees_safety}, when the boost commit mechanism is configured, any two committed blocks $B$ and $B'$, regardless of whether they were committed via boosting, lie on the same chain.
\end{proof}

\section{Evaluation}

This section compares pBeeGees-CB, pBeeGees with the classic Fast-Hotstuff and Chained-Hotstuff algorithms. The BeeGees protocol was omitted from our performance comparison as its fundamental security and liveness flaws render it an unsuitable benchmark. The following sections will first introduce the experimental setup, then present the experimental results with a detailed analysis.

\subsection{Experimental Setup}

Our implementation is based on the Go language and the Relab Hotstuff project\cite{resilient_systems_lab_relabhotstuff_nodate}. The experiment simulates a global Wide Area Network (WAN) with an average latency of 250ms and introduces a 10\% probability of a 500ms severe network delay. The maximum network delay ($\Delta$) is set to 1 second. We primarily evaluate the protocol's performance across two dimensions: fault tolerance and scalability.
First, we tested the protocol's \textbf{fault tolerance} in a network of $n$=7 ($f$=2), evaluating its performance under ``stopping fault'' probabilities of 0\%, 10\%, 25\%, and 50\%. Subsequently, to test \textbf{scalability}, we increased the network to $n$=16 ($f$=5) and repeated the experiment under the 10\% fault probability condition.

\subsection{Experimental Results}
\subsubsection{Fault Tolerance Analysis}
\begin{figure}
    \centering
    \includegraphics[width=\textwidth]{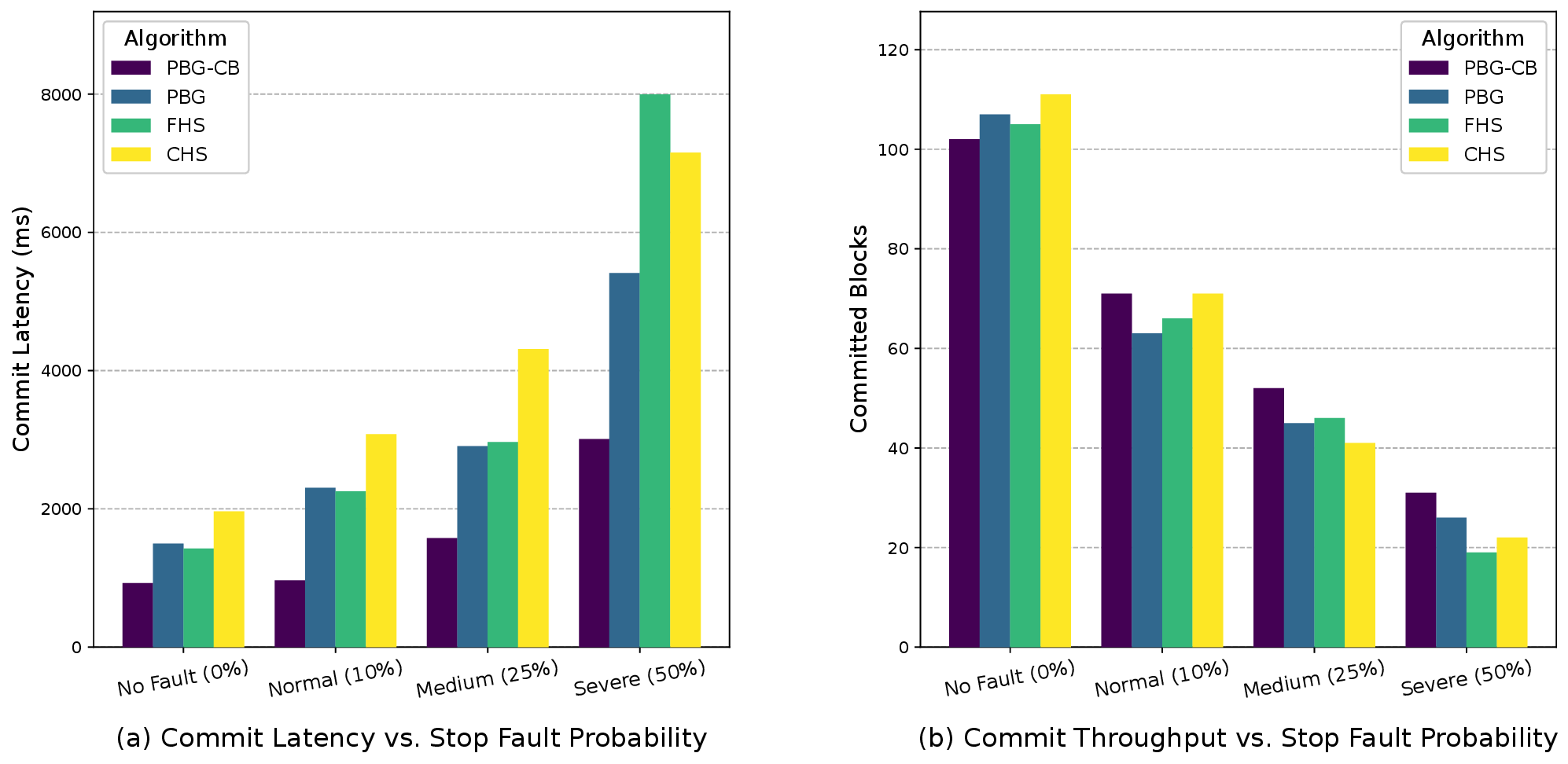}
    \caption{Performance comparison under different stop-fault probabilities with n=7, f=2. }
    \label{fig:fault_tolerance}
\end{figure}

As shown in ~\autoref{fig:fault_tolerance}a, under fault-free conditions, pBeeGees-CB exhibits the lowest commit latency, followed by pBeeGees, FHS, and CHS. This aligns with the theoretical number of message-passing rounds required to commit a block. pBeeGees-CB is the fastest (2 rounds), followed by pBeeGees and FHS (5 rounds), and CHS (7 rounds). 

Although PBG-CB needs to collect votes from all processes to commit a block, which makes PBG-CB's actual performance is slightly slower than the theoretical ratio(i.e.,2:5:5:7) suggests, but it is still significantly superior to the other algorithms, with a latency of about half that of FHS.

As the number of crash faults and the communication load increase, it becomes progressively more difficult for PBG-CB to collect votes from all processes. As a result, the gap in commit latency between it and the other algorithms gradually narrows. However, PBG-CB still maintains a clear advantage. This is because crash faults, especially when a leading process fails to propose, cause discontinuities in the blockchain's view. This prevents FHS and CHS from committing blocks, forcing them to wait for 5-7 more rounds of message passing after a timeout to potentially achieve commitment. PBG does not require a continuous view, but in such scenarios, processes still need to wait for a timeout and then validate multiple blocks, leading to a manifold increase in cryptographic overhead.

Consider a scenario where block $B_v$ receives votes from all processes in view $v$, but the leader of view $v+1$ experiences a crash fault. After a timeout, the new leader $l_{v+2}$ proposes block $B_{v+2}$. PBG-CB can commit $B_v$ as soon as it receives the votes. In contrast, PBG must wait for the timeout and for $B_{v+2}$ to form a Quorum Certificate (QC) before it can commit $B_v$. Furthermore, after validating $B_{v+2}$, processes need to validate $B_v$ again. Therefore, PBG-CB has a significant advantage in commit latency compared to all the other algorithms.

Meanwhile, In most experimental settings, PBG and FHS exhibit similar commit latencies. However, under severe crash faults (e.g., a 50\% crash probability), FHS suffers from frequent view changes, leading to significantly increased latency. In contrast, PBG, despite also face higher overhead, can still gradually reach commitment and thus outperforms FHS.
 It should be clarified that PBG does not only outperform FHS in extreme crash fault scenarios. In practical applications, the system can appropriately reduce the assumed maximum message delay value to decrease the waiting time before a timeout, aiming to improve system efficiency. In this case, if the actual message delay is greater than the reduced value, it can be treated as a crash fault or message loss. In this context, PBG demonstrates more stable performance than FHS. In other words, PBG's performance is on par with FHS in ideal network conditions, but it shows stronger adaptability in adverse network environments and in the presence of crash faults.

~\autoref{fig:fault_tolerance}b shows that throughput remains similar across all protocols. This is because once a block is committed, its ancestor blocks are also committed, resulting in a comparable number of total committed blocks.
\begin{figure}[htbp]
    \centering
    \includegraphics[width=\textwidth]{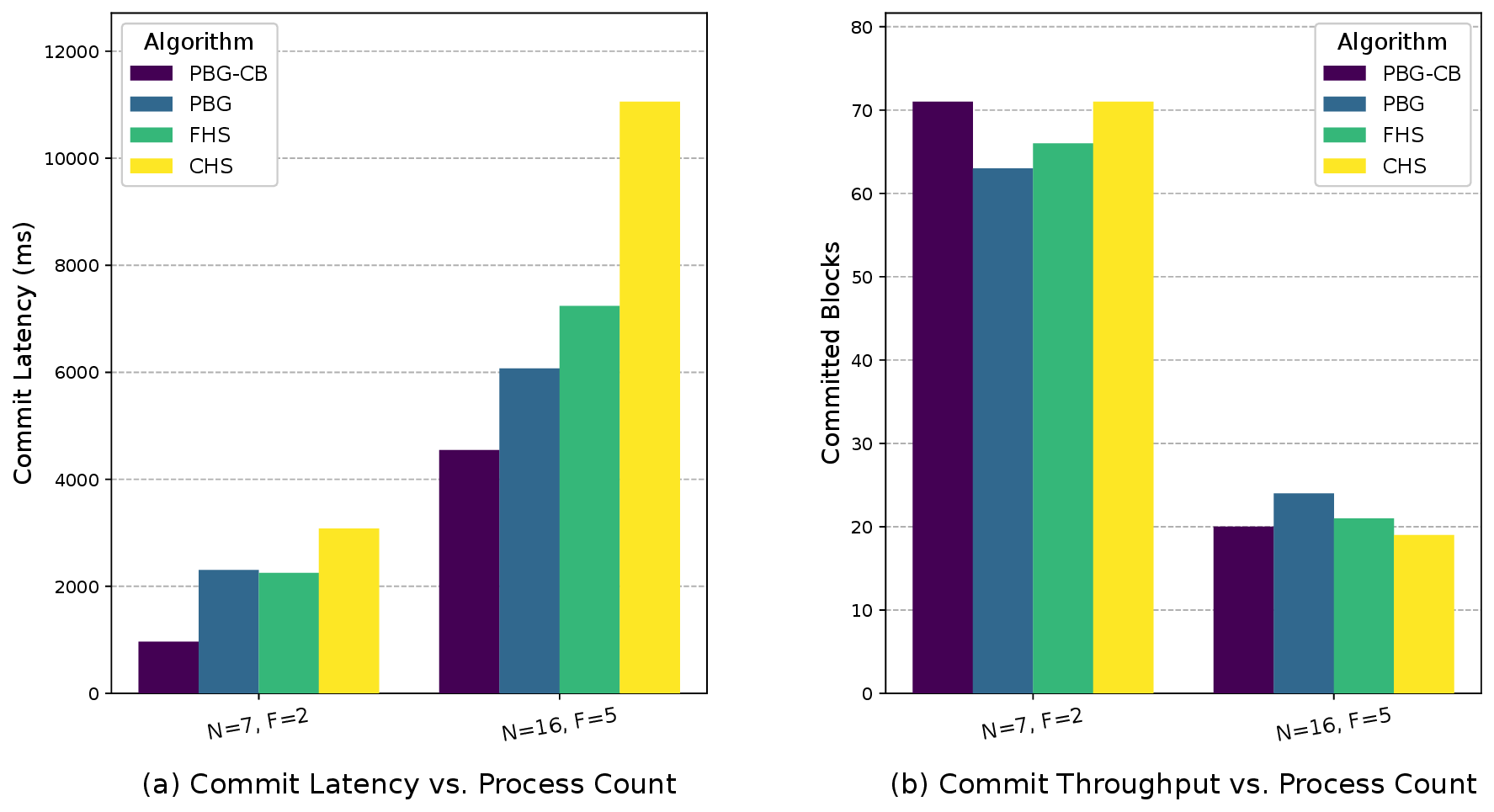}
    \caption{Performance comparison for different numbers of processes (scalability) under a normal (10\%) stop-fault probability. }
    \label{fig:scalability}
\end{figure}

\subsubsection{Scalability Analysis}

As shown in ~\autoref{fig:scalability}, increasing the number of processes degrades performance for all protocols, causing longer commit latency and lower throughput. This is primarily due to the increased communication and cryptographic overhead from collecting more votes to form QCs. This impact is particularly pronounced for pBeeGees-CB, as it requires votes from all processes, unlike other protocols that only need $2f+1$. Consequently, the performance gap between pBeeGees-CB and pBeeGees narrows as the network grows. 

Nevertheless, both pBeeGees and pBeeGees-CB consistently outperform traditional HotStuff variants like FHS and CHS in commit latency, demonstrating their significant advantages in larger-scale environments.

\section{Related Work}
The FLP impossibility theorem~\cite{fischer1985impossibility}, established by Fischer, Lynch, and Paterson, states that in a purely asynchronous network, consensus cannot be guaranteed in a finite time if even a single process crashes. This fundamental limitation led subsequent consensus algorithms to adopt one of two approaches: assuming a partially synchronous network model\cite{chanPaLaSimplePartially2018,castro1999practical, yin2019hotstuff} or employing probabilistic methods in asynchronous settings to ensure consensus is reached with arbitrarily high probability\cite{miller2016honey,duan_beat_2018,gao2022dumbo}.

The liveness of these algorithms hinges on an efficient View Synchronizer\cite{castro1999practical,civit2024byzantine,lewis2022quadratic,lewis2023fever,lewis2024lumiere,Team2021DiemBFTVS}, which ensures nodes enter and remain in the same view long enough to achieve consensus. Diem\cite{Team2021DiemBFTVS} provides a concise and rigorously provable view coordinator, therefore this study adopts the same mechanism.

%
%
%
\bibliographystyle{splncs04}
\bibliography{pBeeGees}

\end{document}